\documentclass[12pt]{amsart}

\usepackage{amsmath,amsthm,amsfonts,amssymb,eucal, bbm}%, hyperref}
\usepackage{scalerel}
\usepackage{mathabx}

\usepackage{color}
% defines \mathscr as a calligarphic font which is
% considerably more fancy than the standard \mathcal
%\usepackage{mathrsfs}

%%%%%%%%%%%%%%%%%%%%%%%%%%%%%%%%%%%%%%%%%%%%%%%%%%%%%%%%%%%%%
\renewcommand {\a}{ \alpha }
\renewcommand{\b}{\beta}

\newcommand{\g}{\gamma}

\renewcommand{\d}{\delta}
\newcommand{\s}{\sigma}
\renewcommand{\l}{\lambda}

\newcommand{\z}{\zeta}
\renewcommand{\t}{\theta}

\newcommand{\p}{\partial}
\newcommand{\om}{\omega}
\newcommand{\Om}{\Omega}

\newcommand{\oq}{\ {\raise 7pt\hbox{${\scriptstyle\circ}$}}
	\kern -7pt{%\lower 2pt
		\hbox{$Q$}}}

\newcommand{\R}{ \mathbb R}

\newcommand {\BA}{\mathbf A}

\newcommand {\BS}{\mathbf S}

\newcommand {\bx}{\mathbf x}

\newcommand{\SfG}{{\sf{G}}}
\newcommand{\sg}{{\sf{g}}}

\newcommand{\SPi}{{\sf{\Pi}}}

\newcommand{\ST}{{\sf{T}}}
\newcommand{\SU}{{\sf{U}}}
\newcommand{\SV}{{\sf{V}}}
\newcommand{\sv}{{\sf{v}}}

\newcommand{\SfS}{{\sf S}}

\newcommand {\BOLG}{\boldsymbol{\sf\Gamma}}
%
%{\boldsymbol\Gamma}
%

%mathcal notation

\newcommand{\CK}{\mathcal K}

\newcommand{\CH}{\mathcal H}

\newcommand{\CM}{\mathcal M}

\newcommand{\CC}{\mathcal C}

%\mathscr notation

\newcommand{\plainW}[1]{\textup{{\textsf{W}}}^{#1}}
\newcommand{\plainC}[1]{\textup{{\textsf{C}}}^{#1}}
\newcommand{\plainB}{\textup{{\textsf{B}}}}

\newcommand{\plainH}[1]{\textup{{\textsf{H}}}^{#1}}

\newcommand{\plainL}[1]{\textup{{\textsf{L}}}^{#1}}
\newcommand{\plainl}[1]{\textup{{\textsf{l}}}^{#1}}

\DeclareMathOperator{\tr}{{tr}}

%SUBSCRIPT

\newcommand{\scale}{{\scaleto{1}{3pt}}}

\newcommand{\scalel}[1]
{{\scaleto{#1}{3pt}}}
\newcommand{\scalet}[1]
{{\scaleto{#1}{4pt}}}

\DeclareMathOperator{\iop}{{\sf Int}}

\DeclareMathOperator{\supp}{{supp}}

\DeclareMathOperator{\dc}{d}

%%%%%%%%%%%%%%%%%%%%%%%%%%%%%%%%%%%%%%%%%%%%%%%%

%\chardef\bslash=`\\ % p. 424, TeXbook
%\newcommand{\ntt}{\normalfont\ttfamily}

%\newcommand{\cn}[1]{{\protect\ntt\bslash#1}}
%    LaTeX package name
%\newcommand{\pkg}[1]{{\protect\ntt#1}}
%    File name
%\newcommand{\fn}[1]{{\protect\ntt#1}}
%    environment name
%\newcommand{\env}[1]{{\protect\ntt#1}}
\hfuzz1pc % Don't bother to report overfull boxes if overage is < 1pc
\vfuzz1pc

\newtheorem{thm}{Theorem}[section]
\newtheorem{cor}[thm]{Corollary}
\newtheorem{lem}[thm]{Lemma}%[section]
\newtheorem{prop}[thm]{Proposition}

\theoremstyle{definition}
%[section]

%\theoremstyle{remark}
\newtheorem{rem}[thm]{Remark}

\numberwithin{equation}{section}

%
%      Lyonya's commands
%

\newcommand{\bee}{\begin{equation}}
	\newcommand{\ene}{\end{equation}}
\newcommand{\bees}{\begin{equation*}}
	\newcommand{\enes}{\end{equation*}}
\newcommand{\bes}{\begin{split}}
	\newcommand{\ens}{\end{split}}

\newcommand{\bet}{\begin{thm}}
	\newcommand{\ent}{\end{thm}}
\newcommand{\bel}{\begin{lem}}
	\newcommand{\enl}{\end{lem}}
\newcommand{\bec}{\begin{cor}}
	\newcommand{\enc}{\end{cor}}
\newcommand{\bep}{\begin{proof}}
	\newcommand{\enp}{\end{proof}}
\newcommand{\ber}{\begin{rem}}
	\newcommand{\enr}{\end{rem}}

\newcommand{\Z}{\mathbb Z}

\newcommand{\1}{\mathbbm 1}

%\newcommand{\bysame}{\mbox{\rule{3em}{.4pt}}\,}

%    \interval is used to provide better spacing after a [ that
%    is used as a closing delimiter.
%\newcommand{\interval}[1]{\mathinner{#1}}

%    Notation for an expression evaluated at a particular condition. The
%    optional argument can be used to override automatic sizing of the
%    right vert bar, e.g. \eval[\biggr]{...}_{...}
%\newcommand{\eval}[2][\right]{\relax
% \ifx#1\right\relax \left.\fi#2#1\rvert}

%    Enclose the argument in vert-bar delimiters:
%\newcommand{\envert}[1]{\left\lvert#1\right\rvert}
%\let\abs=\envert

%    Enclose the argument in double-vert-bar delimiters:
%\newcommand{\enVert}[1]{\left\lVert#1\right\rVert}
%\let\norm=\enVert

%\makeatletter
%\def\square{\RIfM@\bgroup\else$\bgroup\aftergroup$\fi
% \vcenter{\hrule\hbox{\vrule\@height.6em\kern.6em\vrule}\hrule}\egroup}
%\makeatother

%\usepackage{refcheck}
%\usepackage{showkeys}

\setlength{\textwidth}{450pt }
%%%%%%%%%%%%%%%%%%%%%%%%%%%%%%%%%%%%%%%%%%%%%%%%%%%%%%%%%%%%%%%%

\begin{document}
	\hoffset -4pc

\title
[Eigenvalue asymptotics]
{{Eigenvalue asymptotics for the one-particle kinetic energy density operator}} 
\author{Alexander V. Sobolev}
\address{Department of Mathematics\\ University College London\\
	Gower Street\\ London\\ WC1E 6BT UK}
 \email{a.sobolev@ucl.ac.uk}
\keywords{Multi-particle Schr\"odinger operator, one-particle kinetic energy density matrix, eigenvalues, integral operators}
\subjclass[2010]{Primary 35J10; Secondary 47G10, 81Q10}

\begin{abstract} 
The kinetic energy of a multi-particle system is described by the one-particle 
kinetic energy density matrix $\tau(x, y)$. Alongside the 
one-particle density matrix $\gamma(x, y)$ it 
is one of the key objects in the quantum-mechanical approximation schemes. 
We prove the asymptotic formula $\l_k \sim (Bk)^{-2}$, $B \ge 0$, 
as $k\to\infty$,   
for the eigenvalues $\l_k$ of the self-adjoint operator 
$\boldsymbol\ST\ge 0$ with kernel $\tau(x, y)$. 
\end{abstract}

\maketitle

\section{Introduction} 
 
\subsection{Background} 
Consider on $\plainL2(\R^{3N})$ the Schr\"odinger operator 
 \begin{align}\label{eq:ham}
H = \sum_{k=1}^N \bigg(-\Delta_k -  \frac{Z}{|x_k|}
\bigg) 
 + \sum_{1\le j< k\le N} \frac{1}{|x_j-x_k|},
\end{align}
describing an atom with $N$ charged particles  
with coordinates $\bx = (x_1, x_2, \dots, x_N),\,x_k\in\R^3$, $k= 1, 2, \dots, N$, 
and a nucleus with charge $Z>0$. The notation $\Delta_k$ is used for 
the Laplacian w.r.t. the variable $x_k$. 
The operator $H$ acts on the Hilbert space $\plainL2(\R^{3N})$ 
and it is self-adjoint on the domain 
$D(H) =\plainW{2,2}(\R^{3N})$, since the potential in \eqref{eq:ham} 
is an infinitesimal perturbation 
relative to the unperturbed operator $-\Delta = - \sum_k \Delta_k$, 
see e.g. \cite[Theorem X.16]{ReedSimon2}.   
Let $\psi = \psi(\bx)$ 
be an eigenfunction of the operator $H$ with an eigenvalue $E\in\R$, i.e. $\psi\in D(H)$ and 
\begin{align*}%\label{eq:eigen}
(H-E)\psi = 0.
\end{align*}  
For each $j=1, \dots, N$, we represent
\begin{align*}
\bx = (\hat\bx_j, x_j), \quad \textup{where}\ 
\hat\bx_j = (x_1, \dots, x_{j-1}, x_{j+1},\dots, x_N),
\end{align*}
with obvious modifications if $j=1$ or $j=N$. 
The \textit{one-particle density matrix} is defined as the function 
\begin{align}\label{eq:den}
\g(x, y) = \sum_{j=1}^N\int\limits_{\R^{3N-3}} \overline{\psi(\hat\bx_j, x)} \psi(\hat\bx_j, y)\  
d\hat\bx_j,\quad (x,y)\in\R^3\times\R^3. 
\end{align} 
This is a major object in quantum mechanics, see e.g. 
\cite{RDM2000}, \cite{Davidson1976}, \cite{LLS2019}, \cite{LiebSei2010}. 
Introduce also the function 
\begin{align}\label{eq:tau}
\tau(x, y) = 
\sum_{j=1}^N\int\limits_{\R^{3N-3}} \overline{\nabla_x\psi(\hat\bx_j, x)} 
\cdot\nabla_y \psi(\hat\bx_j, y)\  
d\hat\bx_j, 
\end{align}
that we call the \textit{one-particle kinetic energy density matrix}. 
The choice of this term does not seem to be standard, but it is partly 
motivated by the fact that the trace
\begin{align*}
\tr \boldsymbol \ST = \int_{\R^3} \tau(x, x)\, dx 
\end{align*}
gives the kinetic energy of the $N$ particles, see e.g. 
\cite[Section 2A]{Davidson1976}, \cite[Chapter 3]{LiebSei2010} or 
\cite[Section 4]{LLS2019}. 
Denote by $\BOLG = \iop(\g)$ and $\boldsymbol\ST = \iop(\tau)$ the integral operators 
with kernels $\g(x, y)$ and $\tau(x, y)$ respectively. 
We call $\BOLG$ \textit {the one-particle density operator}, 
and $\boldsymbol\ST$ - \textit {the one-particle kinetic energy density operator}. 
If we assume that the particles are spinless fermions (resp. bosons), i.e. that the eigenfunction 
is fully antisymmetric (resp. symmetric) with respect to the permutations $x_k\leftrightarrow x_j$, 
$j, k = 1, 2, \dots, N, j\not = k$, 
then the formulas for $\g(x, y)$ and $\tau(x, y)$ take a more compact form:
\begin{align}\label{eq:fb}
\begin{cases}
\g(x, y) = N \int\limits_{\R^{3N-3}} 
\overline{\psi(\hat\bx, x)} \psi(\hat\bx, y)\ d\hat\bx,\\[0.5cm]
\tau(x, y) = N\int\limits_{\R^{3N-3}} \overline{\nabla_x\psi(\hat\bx, x)} 
\cdot\nabla_y \psi(\hat\bx, y)\,d\hat\bx,
\end{cases}
\end{align}
where $\hat\bx = \hat\bx_N$. In this paper, however, 
we do not make these assumptions and 
work with the general definitions \eqref{eq:den} and \eqref{eq:tau}.
Since $\psi, \nabla\psi\in\plainL2(\R^{3N})$, 
both $\BOLG$ and $\boldsymbol\ST$ are trace class.
We are interested in the exact decay rate of the eigenvalues $\l_k, k= 1, 2, \dots,$ 
for $\BOLG$ and $\boldsymbol\ST$. 
Both operators play a central 
role in quantum chemistry computations of atomic and molecular bound states, see e.g.  the papers 
\cite{Fries2003_1, Fries2003, Lewin2004, Lewin2011}  and the book \cite{SzaboOstlund1996}. 
The knowledge of the eigenvalue behaviour should 
serve to estimate the errors due to finite-dimensional 
approximations,  see e.g. 
\cite{Cioslowski2020}, 
\cite{CioStras2021},  \cite{Fries2003} and \cite{HaKlKoTe2012}.

The eigenvalue asymptotics for the operator $\BOLG$ were studied in \cite{Sobolev2021} 
under the assumption that the function $\psi$ satisfies an exponential bound 
\begin{align}\label{eq:exp}
\sup_{\bx\in\R^{3N}} e^{\varkappa |\bx|_{\scalel{1}}}\, 
 |\psi(\bx)|<\infty
\end{align}  
with some $\varkappa>0$, where $|\,\cdot\,|_1$ denotes the $\plainl1$-norm in $\R^{3N}$. 
This condition  always holds if $E$ is a discrete eigenvalue. For detailed discussion 
and bibliography on the property \eqref{eq:exp} we refer to 
\cite{SimonSelecta}. 
It was shown in \cite{Sobolev2021} that 
\begin{align}\label{eq:bolg}
\lim_{k\to \infty} k^{\frac{8}{3}} \,\l_k(\BOLG) = A^{\frac{8}{3}}
\end{align}
with some coefficient $A\ge 0$, see also \cite{Cioslowski2020}  and \cite{CioPrat2019} 
for relevant quantum chemistry calculations. 
In the author's paper \cite{Sobolev_Nov2021} the main ideas of the proof 
were presented for the simplified variant of the density matrix, containing only one 
term from the sum on the right-hand side of \eqref{eq:den}. 

The subject of the current paper is to establish 
for the eigenvalues $\l_k(\boldsymbol\ST)$ the formula 
\begin{align}\label{eq:mainas}
\lim_{k\to\infty} k^{2}\l_k(\boldsymbol\ST) = B^2,
\end{align}
with a coefficient $B\ge 0$. The precise statement, including the formula for 
the coefficient $B$,  requires further properties 
of the function $\psi$ and 
it is given in Theorem \ref{thm:maincompl}. Moreover, in Sect. \ref{sect:reg} for comparison 
we also provide a formula for the coefficient $A$. 

\subsection{Outline of the proof}
The proof follows the plan of \cite{Sobolev2021}, and it 
splits into three steps that are briefly summarized below.  

\textit{Step 1: factorization.} 
First we represent the operator $\boldsymbol\ST$ as 
the product $\boldsymbol\ST = \boldsymbol\SV^*\boldsymbol\SV$, where 
$\boldsymbol\SV:\plainL2(\R^3)\mapsto \plainL2(\R^{3N-3}; \mathbb C^{3N})$ is a 
suitable compact operator. 
This representation implies that $\l_k(\boldsymbol\ST) = s_k(\boldsymbol\SV)^2$, $k=1, 2 \dots,$, 
where $s_k(\boldsymbol\SV)$ are the singular values of the operator $\boldsymbol\SV$. 
Therefore the asymptotic formula \eqref{eq:mainas} takes the form 
\begin{align}\label{eq:main1}
\lim_{k\to\infty}k\, s_k(\boldsymbol\SV) = B.
\end{align}
If we take formula \eqref{eq:fb} as the definition of $\tau(x, y)$ 
(which is the case for fermions or bosons), then the operator $\boldsymbol\SV$ can be taken to be 
the integral operator 
acting from $\plainL2(\R^3)$ into $\plainL2(\R^{3N-3}; \mathbb C^3)$ 
with the vector-valued kernel $\sqrt{N}\,\nabla_x\psi(\hat\bx, x)$, i.e. 
\begin{align}\label{eq:ferbos}
(\boldsymbol\SV u)(\hat\bx) = \sqrt{N}\int_{\R^3} 
\nabla_x\psi(\hat\bx, x) u(x) d x,\  u\in\plainL2(\R^3).
\end{align}
For the general case the operator $\boldsymbol\SV$ 
has a more complicated form and it is defined 
in Subsect. \ref{subsect:fact}. 
The factorization $\boldsymbol\ST = \boldsymbol\SV^*\boldsymbol\SV$ 
reduces the study of the asymptotics 
to the operator $\boldsymbol\SV$ which is convenient since the properties of 
the functions $\psi$ have been very well studied in the literature. 
For the rest of the introduction we will assume for simplicity that $\boldsymbol\SV$ is given by 
the formula \eqref{eq:ferbos}. 

For the sake of comparison we note that the operator $\BOLG$ also admits a factorization: $\BOLG = \boldsymbol\Psi^* \boldsymbol\Psi$. Under the assumption that 
$\g(x, y)$ is given by the formula \eqref{eq:fb} the 
scalar operator $\boldsymbol\Psi:\plainL2(\R^3)\mapsto \plainL2(\R^{3N-3})$ has the form 
\begin{align}\label{eq:gam}
(\boldsymbol\Psi u)(\hat\bx) = \sqrt{N}\int_{\R^3} 
\psi(\hat\bx, x) u(x) d x,\  u\in\plainL2(\R^3).
\end{align}
This factorization was a key observation in the study of the operator $\BOLG$ in \cite{Sobolev2021}.

\textit{Step 2: estimates for singular values.} 
The asymptotic analysis of the operator $\boldsymbol\SV$ begins with effective bounds for the 
singular values of $\boldsymbol\SV$. By effective bounds we mean bounds for the 
operator $\boldsymbol\SV$ with weights 
$a = a(x), x\in\R^3$, $b = b(\hat\bx), \hat\bx\in \R^{3N-3}$,  
of the form 
\begin{align*}
%\limsup_{k\to\infty} k\, 
s_k(b\, \boldsymbol\SV a)\le C(a, b) k^{-1},
\end{align*}
where the factor on the right-hand side depends explicitly on some 
integral norms of $a$ and $b$.  
The precise statement is the subject of Theorem \ref{thm:psifull}.  
%
%The appropriate bounds for the operator $\boldsymbol\Psi$ were derived in \cite{Sobolev2020} 
%using the results by 
%M.S. Birman and M.Z. Solomyak \cite{BS1977} on 
%estimates for singular values of  
%integral operators via norms of their kernels in certain 
%Sobolev spaces. 
%In the current paper we use for the operator $\boldsymbol\SV$ 
%an approach which  is shorter and more direct:
%
%Precisely, 
%
%we observe that the available regularity properties of the kernel 
%$\psi(\hat\bx, x)$ make it more natural and convenient to use the Besov spaces instead. 
%This observation makes the derivation 
%
%
%
The appropriate bounds for the operator $\boldsymbol\Psi$ were derived in \cite{Sobolev2020} 
using the following two ingredients: 
global estimates for the derivatives of $\psi$ 
obtained by S. Fournais and T.\O. S\o rensen in the recent paper \cite{FS2021},  
and 
the results by 
M.S. Birman and M.Z. Solomyak \cite{BS1977} on 
estimates for singular values of  
integral operators via norms of their kernels in certain 
\textit{Sobolev} spaces. 
Later the author realized that Sobolev spaces are not the most suitable ones 
for these purposes. Indeed,   
application of the bounds from \cite{FS2021} 
%
%
%Fournais-S\o rensen bounds 
%
in the Sobolev spaces framework makes the proof in \cite{Sobolev2020}
quite long and unnaturally complicated. 
In the current paper we observe that 
%
%On the other hand, 
%
the Fournais-S\o rensen 
bounds almost immediately imply that the kernel $\nabla_x\psi(\hat\bx, x)$ 
%
%function $\psi$ 
%
belongs to a certain \textit{Besov} 
space. Thus a different group of estimates from \cite{BS1977} is applicable -- the one that uses 
the Besov spaces instead of the Sobolev spaces. This approach 
makes the proof of the effective bounds for $\boldsymbol\SV$ much shorter and more direct.  

%
% 
%In the current paper for the operator $\boldsymbol\SV$ 
%we rely  on 
%a different group of estimates from \cite{BS1977} that uses the \textit{Besov} spaces instead. 
%A pivotal moment in this analysis is the observation that 
%the known regularity properties of the function $\psi(\bx)$ place it in a certain Besov class
%
%an approach which  is shorter and more direct:
%
%Precisely, 
%
%we observe that the available regularity properties of the kernel 
%$\psi(\hat\bx, x)$ make it more natural and convenient to use the Besov spaces instead. 
%This observation makes the derivation 

The Fournais-S\o rensen paper \cite{FS2021} is one of the most recent contributions to the 
long history of regularity properties for the multi-particle Schr\"odinger equation. 
We refer to \cite{FS2021} for further references to the vast bibliography on this subject. 
%
%
%In article \cite{FS2021} one can find further references 
%to the vast bibliography on the regularity of solutions to the Schr\"odinger equation, 
%it contains   

%The study of regularity 
%for the Schr\"odinger equation 
%
%study 
%of the function $\psi$ 
%
%has a long history but here we mention only the most relevant 
%results. 
%By the ellipticity argument, the function $\psi$ is 
%real analytic away from the coalescence points of 
%the particles, i.e. from $x_j\not = x_k, 1\le j < k \le N$ and $x_j\not = 0$, 
%$j = 1, 2, \dots, N$. The classical T. Kato's paper \cite{Kato1957} shows that 
%at the coalescence points the function $\psi$ is Lipschitz. 
%In our proofs we use the global estimates for the derivatives of $\psi$ 
%obtained by S. Fournais and T.\O. S\o rensen in the recent paper \cite{FS2021}. 
%They ensure that the kernel $\psi(\hat\bx, x)$, 
%as a function of the variable $x$, 
%belongs to some Besov %space. 
%This, in combination with the estimates from \cite{BS1977} leads to the required 
%effective bounds for $\boldsymbol\SV$.
%In article \cite{FS2021} one can find further references 
%to the vast bibliography on the regularity of solutions to the Schr\"odinger equation.  

\textit{Step 3: asymptotics.}  This step follows the approach of \cite{Sobolev2021} where the 
asymptotic formula \eqref{eq:bolg} was proved.  
In order to obtain an 
asymptotic formula for $\boldsymbol\SV$ 
one needs to know the behaviour of the kernel $\nabla_x\psi(\hat\bx, x)$ 
near the coalescence points. An appropriate representation 
for the function $\psi$ was obtained 
in \cite{FHOS2009}. 
To simplify the explanation we assume that the system consists of two particles, 
i.e. that $N=2$. 
Under this assumption the problem retains all its crucial features, 
but permits to avoid some tedious technical details. For $N=2$ we have $\bx = (t, x)\in \R^3\times\R^3$, 
and the operator $\boldsymbol\SV$ as defined in \eqref{eq:ferbos} 
acts from $\plainL2(\R^3)$ into $\plainL2(\R^3; \mathbb C^3)$. 
According to \cite{FHOS2009}, there exists a neighbourhood 
$\Om_{1,2}\subset \big(\R^3\setminus \{0\}\big)\times \big(\R^3\setminus\{0\}\big)$ of the 
diagonal set $\{(x, x): x\in\R^3\setminus \{0\}\}$ 
and two functions $\xi_{1,2}, \eta_{1, 2}$, real analytic in $\Om_{1,2}$,  such that 
the eigenfunction $\psi = \psi(t, x)$ admits the representation
\begin{align}\label{eq:locan0}
\psi(t, x) = \xi_{1,2}(t, x) + |t-x|\,\eta_{1,2}(t, x),
\quad \textup{for all}\quad (t, x)\in \Om_{1,2},
\end{align}
and hence the kernel of the operator $\boldsymbol\SV$ has the form $\sqrt 2\, \nabla_x\psi(t, x)$, 
where  
\begin{align}\label{eq:locan}
\nabla_x\psi(t, x) = &\ \nabla_x\xi_{1,2}(t, x) + |t-x|\,\nabla_x\eta_{1,2}(t, x) 
+ \frac{x-t}{|x-t|}\eta_{1,2}(t, x),\notag\\
&\ \qquad \qquad\textup{for all}\quad (t, x)\in \Om_{1,2}.
\end{align}
Each of the terms on the right-hand side 
of \eqref{eq:locan} 
gives a different contribution 
to the asymptotics of the singular values. According to \cite{BS1977}, infinitely smooth kernels 
lead to a decay rate which is faster than any negative power of the singular value number. 
This allows us to say that the first term on the right-hand side of \eqref{eq:locan} gives a zero contribution. 
The other two terms contain homogeneous factors of order one and zero respectively. 
Spectral asymptotics for operators with homogeneous kernels were studied extensively by 
M. Birman and M. Solomyak in \cite{BS1970, BS1977_1} and \cite{BS1979}, 
see also \cite{BS1977}. A summary of the results needed for our purposes here can be found 
in \cite{Sobolev2021}.

The homogeneity order one kernels played a central role in paper 
\cite{Sobolev2021} in the study of the operator $\boldsymbol\Psi$ defined in \eqref{eq:gam}. 
In this case the Birman-Solomyak theory led 
to the asymptotics 
\begin{align*}
\lim_{k\to\infty} k^{\frac{4}{3}} s_k(\boldsymbol\Psi) = A^{\frac{4}{3}},
\end{align*}
which, in its turn, implied \eqref{eq:bolg}. 
In the current paper the decay of the singular values of the operator 
$\boldsymbol\SV$ is determined by the term of homogeneity order zero in \eqref{eq:locan}. 
Precisely, according to the Birman-Solomyak results (see \cite{Sobolev2021} for a summary), 
kernels of order zero produce the asymptotics   
of order $k^{-1}$ , which agrees with \eqref{eq:main1}. However, as in \cite{Sobolev2021}, 
the known formulas for integral operators with homogeneous kernels are not directly applicable, 
since we have information neither on the smoothness of the functions $\xi_{1, 2}, \eta_{1, 2}$ 
on the closure  
$\overline{\Om_{1,2}}$ nor on the integrability of these functions and 
their derivatives on $\Om_{1, 2}$. 
To resolve this problem we 
approximate $\xi_{1,2}, \eta_{1,2}$ by suitable 
$\plainC\infty_0$-functions supported inside $\Om_{1,2}$. 
The estimates obtained in Step 2 ensure that the induced error in the spectral asymptotics 
tends to 
zero as the smooth approximations converge.  
In the limit this leads to the formula 
\eqref{eq:main1} with the constant
\begin{align*}%\label{eq:Aprem}
B = \frac{4}{3\pi}\int_{\R^3} |2^{1/2}\eta_{1,2}(x, x)| dx.
\end{align*}  
Interestingly, 
the finiteness of the above integral is a by-product of the 
spectral bounds. 
We emphasize that  the coalescence points $x = 0$ or $t = 0$ do not contribute to the asymptotics.  

For $N\ge 3$ the proof needs to be modified to incorporate representations of the form 
\eqref{eq:locan0} in the neighbourhood of all pair coalescence points 
$x_j = x_k$, \ $j, k = 1, 2, \dots, N$,\ $j \not = k$, see \eqref{eq:repr}. 
The kernel of the operator 
$\boldsymbol\SV$ near the 
coalescence points becomes more complex and as a result the proof requires an 
extra step involving an 
 integral operator whose structure mimics that of the operator $\boldsymbol\SV$. 
Such a model operator was studied in the author's paper \cite{Sobolev2021}. In order 
to make it adaptable for the use in different  situations, the model operator was allowed to have 
arbitrary order of homogeneity at the pair coalescence points. For order one the model operator 
was the key point in the analysis for  
the operator $\boldsymbol\Psi$ conducted in \cite{Sobolev2021}. In the current paper we 
use the same model operator but with homogeneity of order zero, which is relevant to the operator 
$\boldsymbol\SV$. The required results are collected in Subsect. \ref{subsect:model}. 
Using these facts together with 
approximations of $\boldsymbol\SV$ similar to the ones in the case $N=2$, 
we arrive at the asymptotics \eqref{eq:main1} with the coefficient $B$ given in \eqref{eq:coeffA}, 
thereby concluding the proof.  

\subsection{Plan of the paper}
The paper is organized as follows. 
Section \ref{sect:reg} contains the main result and some preliminaries. 
First we provide the necessary facts about regularity of the function $\psi$ including 
its representation near the pair coalescence points. This allows us to state   
the main result (Theorem \ref{thm:maincompl}) which includes 
formula \eqref{eq:coeffA} for the coefficient $B$. 
The important Theorem \ref{thm:etadiag} ensures that the coefficient $B$ is 
finite.  
In Subsect. \ref{subsect:compact} 
we put together some general information on  
compact operators with power-like spectral behaviour.
The focus of Sect. \ref{sect:intop} is on the basic spectral results for 
integral operators.  
In Subsect. \ref{subsect:nikol} 
we establish an elementary but very useful test of membership in some Besov spaces.  
It is used in Subsect. \ref{subsect:BS} to adapt 
spectral bounds from \cite{BS1977} for the use with the operator $\boldsymbol\SV$. 
Subsect. \ref{subsect:model} provides spectral asymptotics 
of the model integral operator that had been examined in 
\cite{Sobolev2021}. 
The proof of the main results begins in Sect. \ref{sect:factor}. First, in Subsect \ref{subsect:fact} 
we detail the 
factorization $\boldsymbol\ST = \boldsymbol\SV^*\boldsymbol\SV$ which allows us to recast 
Theorem \ref{thm:maincompl} in terms of the operator $\boldsymbol\SV$, see Theorem 
\ref{thm:ttov}. 
In Lemma \ref{lem:central} we construct an approximation for $\boldsymbol\SV$ 
which enables us to use the model operator from Subsect. \ref{subsect:model} 
to prove  
Theorem \ref{thm:ttov} (and hence Theorem \ref{thm:maincompl}) 
in Subsect. \ref{subsect:proofs}.  The last Section 
\ref{sect:trim} is focused on the proof 
of the approximation Lemma \ref{lem:central}. It begins with the derivation of 
effective spectral bounds for the operator $\boldsymbol\SV$ in Subsect. \ref{subsect:prep}. 
This step is immediate as it is directly based on the bounds derived in Subsect. \ref{subsect:BS}.
 The bounds for $\boldsymbol\SV$ are used in Subsect. \ref{subsect:trim} to show that  
the error in the asymptotics induced by the approximations of $\boldsymbol\SV$ tends to 
zero as the approximations converge, thereby proving Lemma \ref{lem:central}. 
In the Appendix we establish an elementary extension result for the Sobolev spaces which is used in 
Subsect. \ref{subsect:nikol}.

\subsection{Notational conventions} 
We conclude the introduction with some general notational conventions.  

\textit{Coordinates.} 
As mentioned earlier, we use the following standard notation for the coordinates: 
$\bx = (x_1, x_2, \dots, x_N)$,\ where $x_j\in \R^3$, $j = 1, 2, \dots, N$. 
The vector $\bx$ is often represented in the form  
\begin{align*}
\bx = (\hat\bx_j, x_j) \quad \textup{with}\quad   
\hat\bx_j = (x_1, x_2, \dots, x_{j-1}, x_{j+1},\dots, x_N)\in\R^{3N-3}, 
\end{align*}
for arbitrary $j = 1, 2, \dots, N$. Most frequently we use this notation with $j=N$, and  
write $\hat\bx = \hat\bx_N$, so that $\bx = (\hat\bx, x_N)$. 
In order to write 
formulas in a more compact and unified way, we sometimes use the notation 
$x_0 = 0$. 

In the space $\R^d, d\ge 1,$ 
the notation $|x|$ stands for the Euclidean norm. 
Sometimes it is convenient to use the $\plainl1$-norm of $x$ 
which we denote by $|x|_1$.

For $N\ge 3$ it is also useful to introduce the notation 
for $\bx$ with $x_j$ and $x_k$ taken out: 
\begin{align}\label{eq:xtilde}
\tilde\bx_{k, j} = \tilde\bx_{j, k} 
= (x_1, \dots, x_{j-1}, x_{j+1}, \dots, x_{k-1}, x_{k+1},\dots, x_N), \quad \textup{for}\  j <k.
\end{align}
If $j < k$, then we write $\bx = (\tilde\bx_{j, k}, x_j, x_k)$. For any $j\le N-1$ 
the vector $\hat\bx$ can be represented as $\hat\bx = (\tilde\bx_{j, N}, x_j)$.

The notation $B_R$ is used for the Cartesian product 
$\bigtimes_{j=1}^N \{x_j\in\R^3: |x_j| <R\}$. 

\textit{Derivatives.} 
Let $\mathbb N_0 = \mathbb N\cup\{0\}$.
If $x = (x', x'', x''')\in \R^3$ and $m = (m', m'', m''')\in \mathbb N_0^3$, then 
the derivative $\p_x^m$ is defined in the standard way:
\begin{align*}
\p_x^m = \p_{x'}^{m'}\p_{x''}^{m''}\p_{x'''}^{m'''}.
\end{align*} 
For any $l = 0, 1, 2, \dots$, we use the notation 
\begin{align}\label{eq:nabla}
|\nabla_x^l u| = \sum_{|m|=l}|\p_x^m u|.  
  \end{align}
The symbols $\plainW{l, p}(\Om)$, $\plainW{l, p}_{\textup{loc}}(\Om)$, $l = 0, 1, \dots$, 
$1\le p\le \infty$, 
 stand for the standard Sobolev spaces on the open set $\Om$.   
  
\textit{Bounds.} 
For two non-negative numbers (or functions) 
$X$ and $Y$ depending on some parameters, 
we write $X\lesssim Y$ (or $Y\gtrsim X$) if $X\le C Y$ with 
some positive constant $C$ independent of those parameters.
 To avoid confusion we may comment on the nature of 
(implicit) constants in the bounds. 
We also use the notation $X\wedge Y = \min\{X, Y\}$. 

\textit{Cut-off functions.} 
We systematically use the following smooth cut-off functions. Let 
\begin{align}\label{eq:sco}
\t, \z\in \plainC\infty([0, \infty)), 
\quad \z = 1-\t, 
\end{align}
be functions such that $0\le \t\le 1$ and 
\begin{align}\label{eq:sco1} 
\t(t) = 0,\quad \textup{if}\quad t>1;\ \quad
\t(t) = 1,\quad \textup{if}\quad 0\le t<\frac{1}{2}. \ 
\end{align}

\textit{Integral operators.} 
The notation $\iop(\CK)$ is used for the integral operator with kernel $\CK$, 
e.g. $\BOLG = \iop(\g)$.  
The functional spaces, where $\iop(\CK)$ acts are obvious from the context.

\section{Regularity of $\psi$, main result,  
compact operators}\label{sect:reg}
    
\subsection{Regularity of the eigenfunction} 
Our argument relies on the following regularity properties of the eigenfunction $\psi$. 
For convenience, along with the standard 
notation $\bx = (x_1, x_2, \dots, x_N)\in\R^{3N}$ 
below we use the notation $x_0 = 0$. 
Thus, unless otherwise stated, the indices labeling the particles, run from $0$ to $N$. 
Denote the set 
%of coalescence points by 
\begin{align}\label{eq:sls}
\SfS_{l,s} = \SfS_{s, l} = \{\bx\in\R^{3N}: x_l\not = x_s\},\ 
l\not = s. 
\end{align}
Since the Coulomb potential is real analytic away from the origin, 
by an elementary ellipticity argument (see e.g. \cite{Hor1976}), 
the eigenfunction $\psi$ is real analytic away from the coalescence points, i.e. on the set 
\begin{align*}%\label{eq:analyt}
\SU = 
\bigcap_{0\le l < s\le N}\,\SfS_{l,s}.
\end{align*}
Apart from this qualitative fact we need explicit bounds for $\psi$ and its derivatives 
w.r.t. $x_j$, $j = 1, 2, \dots, N,$ on the set  
\begin{align*}
\SPi_j = \bigcap_{\substack{l\not= j\\
0\le l\le N}}\, \SfS_{l, j},
\end{align*}
which does not contain the points of coalescence of the variable $x_j$ with the remaining ones, 
but allows $x_k = x_s$ for all $k\not= j, s\not \not = j$. 
Such bounds were obtained 
by S. Fournais and T.\O. S\o rensen in \cite{FS2021}. Let 
\begin{align*}
\dc(\hat\bx_j, x_j) = \min \{1, |x_j-x_k|, \ 0\le k \le N, \, k\not = j\},\quad  j = 1, 2, \dots, N.
\end{align*}
The following proposition is a consequence of \cite[Corollary 1.2]{FS2021}. 
Recall that $|\, \cdot\, |_{\scale}$ denotes the $\plainl1$-norm.

\begin{prop}\label{prop:FS} 
Assume that $\psi$ satisfies 
\eqref{eq:exp}. Then for all $\bx = (\hat\bx_j, x_j)\in\SPi_j$, $j = 1, 2, \dots, N$, 
we have the bound 
\begin{align}\label{eq:FS}
|\nabla_{x_j}^m\psi(\hat\bx_j, x_j)|\lesssim \big(1+\dc(\hat\bx_j, x_j)^{1-m}\big)e^{-\varkappa {|\bx|_\scale}}, 
\, \quad \textup{for all}\quad m = 0, 1, \dots,
\end{align}
where we have used the notation \eqref{eq:nabla}.
\end{prop}
  
In order to state the main result we also need the formula for the function $\psi$ describing its behaviour 
in a neighbourhood of the pair coalescence points, obtained in \cite{FHOS2009}.  
Along with the set $\SU$, for each pair $j, k: j\not = k$, define the set 
\begin{align}\label{eq:uj}
\SU_{j,k} =  
\bigcap_{\substack{l \not = s\\
(l, s)\not = (j, k)}}  
\SfS_{l,s}.
\end{align}
It contains the coalescence point $x_j = x_k$, 
whereas the other pairs of variables are separated from each other. 
We are interested in the shape of function $\psi$ 
near the diagonal
\begin{align}\label{eq:diag}
\SU^{(\rm d)}_{j,k} = \{\bx\in\SU_{j,k}: x_j = x_k\}.
\end{align}
The sets introduced above are obviously symmetric with respect to permutations of 
indices, e.g. $\SU_{j,k} = \SU_{k,j}$, $\SU^{(\rm d)}_{j, k}=\SU^{(\rm d)}_{k,j}$.

The following property follows from \cite[Theorem 1.4]{FHOS2009}.

\begin{prop}\label{prop:repr}
For each pair of indices $j, k = 0, 1, \dots, N$ such that $j\not = k$,  there exists 
an open connected set $\Om_{j,k} = \Om_{k, j}\subset \R^{3N}$, such that 
\begin{align}\label{eq:omin}
\SU_{j,k}^{(d)}\subset\Om_{j,k}\subset \SU_{j,k},
\end{align}
and two uniquely defined functions $\xi_{j, k}, \eta_{j, k}$, both 
real analytic on $\Om_{j, k}$, such that for all $\bx\in \Om_{j, k}$  
the following representation holds:
\begin{align}\label{eq:repr}
 \psi(\bx) = \xi_{j, k}(\bx) + |x_j-x_k| \eta_{j, k}(\bx).
\end{align} 
\end{prop}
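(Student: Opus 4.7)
The plan is to deduce Proposition \ref{prop:repr} from the local representation theorem \cite[Theorem 1.4]{FHOS2009}, which supplies the decomposition in a neighbourhood of each individual coalescence point; the remaining task is to promote this to a statement on a single connected open set via a uniqueness-and-gluing argument.

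I would first apply \cite[Theorem 1.4]{FHOS2009} at every point $\bx^{(0)} \in \SU^{(d)}_{j,k}$ to obtain an open neighbourhood $U(\bx^{(0)}) \subset \SU_{j,k}$ and real analytic functions $\xi^U, \eta^U$ on $U(\bx^{(0)})$ with $\psi = \xi^U + |x_j - x_k|\eta^U$ there, and then set
\begin{align*}
\Omega_{j,k} = \bigcup_{\bx^{(0)} \in \SU^{(d)}_{j,k}} U(\bx^{(0)}).
\end{align*}
This set is open and automatically satisfies $\SU^{(d)}_{j,k} \subset \Omega_{j,k} \subset \SU_{j,k}$. It is connected: $\SU^{(d)}_{j,k}$ is itself path-connected, being the complement, in the $(3N-3)$-dimensional affine subspace $\{x_j = x_k\} \subset \R^{3N}$, of finitely many submanifolds of real codimension three, and each $U(\bx^{(0)})$ meets this common connected set.

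The key step is uniqueness of the decomposition \eqref{eq:repr}: once that is in hand, the identity principle for real analytic functions on the connected set $\Omega_{j,k}$ shows that $\xi^U = \xi^{U'}$ and $\eta^U = \eta^{U'}$ on any overlap $U \cap U'$, so the local data glue into globally defined real analytic $\xi_{j,k}, \eta_{j,k}$. To prove uniqueness I would suppose $\xi_1 + |x_j - x_k|\eta_1 = \xi_2 + |x_j - x_k|\eta_2$ near $\bx^{(0)} \in \SU^{(d)}_{j,k}$ with all four functions real analytic, and set $f = \xi_1 - \xi_2$, $g = \eta_2 - \eta_1$, so that $f = |x_j - x_k|\, g$. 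Restricting to any line $t \mapsto \bx^{(0)} + t\nu$ with $\nu_j \neq \nu_k$, the function $|x_j - x_k|$ reduces to $|t|\,|\nu_j - \nu_k|$, while $f, g$ become real analytic functions of $t$. Expanding $f(t) = \sum a_n t^n$, $g(t) = \sum b_n t^n$ and matching the identity $f(t) = |t|\,|\nu_j - \nu_k|\, g(t)$ separately for $t > 0$ and $t < 0$ forces $a_n = b_n = 0$ for every $n$. Since the set of admissible directions $\nu$ is open and non-empty, real analyticity yields $f \equiv g \equiv 0$ in a neighbourhood of $\bx^{(0)}$.

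With uniqueness and gluing complete, the symmetry $\Omega_{j,k} = \Omega_{k,j}$, $\xi_{j,k} = \xi_{k,j}$, $\eta_{j,k} = \eta_{k,j}$ is immediate from the invariance of the representation \eqref{eq:repr} under the exchange $j \leftrightarrow k$. The main obstacle is the uniqueness step, which is exactly where one exploits that $|x_j - x_k|$ fails to be smooth on $\{x_j = x_k\}$; the deeper analytic input, namely the existence of the decomposition and the real analyticity of $\xi_{j,k}, \eta_{j,k}$, is imported directly from \cite{FHOS2009}.
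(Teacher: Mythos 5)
The paper does not actually prove this proposition; it simply states that it ``follows from \cite[Theorem 1.4]{FHOS2009}'' and moves on, so your derivation is filling in details the author left implicit. Your uniqueness argument is correct and well executed: restricting the identity $f = |x_j - x_k|\,g$ to lines through a diagonal point in a direction $\nu$ with $\nu_j \neq \nu_k$, and matching Taylor coefficients separately for $t > 0$ and $t < 0$, does force $f = g = 0$ near that point, and this is exactly the mechanism by which the non-smoothness of $|x_j-x_k|$ across the diagonal pins down the splitting. The connectedness of $\SU^{(d)}_{j,k}$ as the complement of finitely many codimension-three subspaces in a $(3N-3)$-dimensional affine space is also correct.

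The gap is in the gluing step. You write that ``the identity principle for real analytic functions on the connected set $\Omega_{j,k}$ shows that $\xi^U = \xi^{U'}$ on any overlap $U \cap U'$,'' but the functions $\xi^U$ and $\xi^{U'}$ are not both defined on $\Omega_{j,k}$; they are only both defined on $U \cap U'$, and the identity principle would have to be applied there. That requires $U \cap U'$ to be connected \emph{and} to contain a diagonal point where your uniqueness argument kicks in, and neither property follows from the connectedness of $\Omega_{j,k}$. Off the diagonal, $|x_j-x_k|$ is itself real analytic, so the decomposition is genuinely non-unique there: two neighbourhoods whose overlap avoids $\SU^{(d)}_{j,k}$ can carry decompositions that disagree on that overlap, and your argument as written does not rule this out. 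To repair this one should build $\Omega_{j,k}$ more carefully --- e.g.\ as a tubular-type neighbourhood of the diagonal in which every overlap of the defining patches can be joined to the diagonal through the overlap --- or, more simply, check whether \cite[Theorem 1.4]{FHOS2009} already asserts the decomposition on a full open neighbourhood of the coalescence set (as its phrasing suggests), in which case the patchwork construction is unnecessary and only the uniqueness and symmetry observations need to be added.
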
 

Because of the uniqueness of functions $\xi_{j, k}, \eta_{j, k}$, they are 
symmetric with respect to the permutations $j\leftrightarrow k$, i.e. 
$\xi_{j, k} = \xi_{k, j}$, $\eta_{j, k} = \eta_{k, j}$ for all $j\not = k$. 
  
The asymptotic coefficient $B$ in the formula \eqref{eq:mainas} is defined via the functions 
$\eta_{j,k}$, $j, k = 1, 2, \dots, N, j<k$, on the sets \eqref{eq:diag}.  
Using the notation \eqref{eq:xtilde} we write the function $\eta_{j, k}(\bx)$ 
on $\SU_{j, k}^{(\dc)}$ as $\eta_{j, k}(\tilde\bx_{j, k}, x, x)$. Proposition \ref{prop:repr} 
gives no information on the properties of $\xi_{j, k}, \eta_{j, k}$ on the closure $\overline{\Om_{j, k}}$, 
but while proving the asymptotic formula \eqref{eq:mainas} we find the following integrability 
property of $\eta_{j, k}(\tilde\bx_{j,k}, x, x)$.

\begin{thm}\label{thm:etadiag} 
If $N\ge 3$, then each function $\eta_{j, k}(\ \cdot\ , x, x)$, $1\le j < k\le N$, 
belongs to $\plainL2(\R^{3N-6})$ a.e. $x\in \R^3$ and   
the function 
\begin{align}\label{eq:H}
H(x):= \bigg[2 \sum\limits_{1\le j < k\le N}\int_{\R^{3N-6}} \big| 
\eta_{j, k}(\tilde\bx_{j,k}, x, x) \big|^2 d\tilde\bx_{j, k}\bigg]^{\frac{1}{2}},
\end{align} 
 belongs to $\plainL1(\R^{3})$. 
 
If $N = 2$, then the function $H(x):= \sqrt2 |\eta_{1, 2}(x, x)|$ 
belongs to $\plainL1(\R^{3})$.
\end{thm}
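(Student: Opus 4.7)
The plan is to deduce the $\plainL1$-integrability of $H$ as a by-product of the effective upper bounds on the singular values of $\boldsymbol\SV$ provided by Theorem \ref{thm:psifull}, combined with the Birman--Solomyak asymptotics of the model integral operator of Subsect.~\ref{subsect:model} applied to smooth compactly supported truncations of the functions $\eta_{j,k}$. The guiding idea is that the asymptotic coefficient produced by the model operator must be dominated by the (finite) effective bound on $\boldsymbol\SV$, which forces the integral defining $H$ to be finite.

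First I would treat the case $N=2$. Using \eqref{eq:locan}, the kernel $\sqrt{2}\,\nabla_x\psi(t,x)$ of $\boldsymbol\SV$ decomposes, near the diagonal $\{t=x\neq 0\}$, into a $\plainC\infty$ part plus the order-zero homogeneous term $\sqrt 2\,(t-x)|t-x|^{-1}\eta_{1,2}(t,x)$. Approximate $\eta_{1,2}$ from inside $\Om_{1,2}$ by a sequence $\eta^{(n)}\in\plainC\infty_0(\Om_{1,2})$ with $|\eta^{(n)}|\nearrow|\eta_{1,2}|$ a.e., and form the model operators $\BM_n$ with kernel $\sqrt 2\,(t-x)|t-x|^{-1}\eta^{(n)}(t,x)$. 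A smooth cut-off of $\boldsymbol\SV$ near the diagonal agrees with $\BM_n$ modulo an integral operator whose kernel is infinitely smooth inside $\Om_{1,2}$, plus contributions outside the cut-off support; by Theorem \ref{thm:psifull} and the Birman--Solomyak bounds of \cite{BS1977} both have singular values of order $o(1/k)$. Hence Theorem \ref{thm:psifull} transfers to $\BM_n$ as an $n$-uniform bound $\limsup_k k\,s_k(\BM_n)\le C$. On the other hand, the Birman--Solomyak asymptotics for order-zero homogeneous kernels summarised in Subsect.~\ref{subsect:model} give $\lim_k k\,s_k(\BM_n)=(4/(3\pi))\int_{\R^3}\sqrt 2\,|\eta^{(n)}(x,x)|\,dx$. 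Monotone convergence in $n$ then forces $\int_{\R^3}\sqrt 2\,|\eta_{1,2}(x,x)|\,dx\le(3\pi/4)\,C<\infty$, i.e.\ $H\in\plainL1(\R^3)$.

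For $N\ge 3$ the same strategy applies pair by pair: each coalescence locus $\SU^{(\dc)}_{j,k}$ contributes an order-zero term $(x_j-x_k)|x_j-x_k|^{-1}\eta_{j,k}(\bx)$ to the kernel of $\boldsymbol\SV$ via \eqref{eq:repr}. Approximating each $\eta_{j,k}$ from inside $\Om_{j,k}$ by smooth truncations $\eta^{(n)}_{j,k}$ and assembling a single model operator whose kernel is the sum of these pair contributions, Subsect.~\ref{subsect:model} gives an asymptotic coefficient equal to $\int_{\R^3} H^{(n)}(x)\,dx$, where $H^{(n)}$ is obtained from \eqref{eq:H} by replacing each $\eta_{j,k}$ with $\eta^{(n)}_{j,k}$. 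Combining with the $n$-uniform bound from Theorem \ref{thm:psifull} and applying Fatou's lemma (together with Fubini to recover a.e.\ finiteness of the fibre integrals) yields $H\in\plainL1(\R^3)$.

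The main obstacle I anticipate is, in the $N\ge 3$ case, verifying that contributions from distinct pairs combine into the $\plainL2$-sum inside the square root of \eqref{eq:H} rather than into a coarser sum. This relies on the precise structure of the model operator in Subsect.~\ref{subsect:model}, whose asymptotic coefficient was computed in \cite{Sobolev2021}; the sum-in-quadrature appears because the different $\eta_{j,k}$ enter as orthogonal vector components over the fibre variable $\tilde\bx_{j,k}\in\R^{3N-6}$ after factorization. A secondary difficulty is controlling the cut-off errors uniformly in $n$, which is precisely where the effective bounds of Theorem \ref{thm:psifull}---rather than any direct pointwise estimate on $\eta_{j,k}$ near $\overline{\Om_{j,k}}$---become essential.
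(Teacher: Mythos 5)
Your proposal is correct and is essentially the same route the paper takes: it derives $H\in\plainL1$ as a by-product of the proof of the asymptotics, by comparing the model-operator coefficient (Proposition \ref{prop:gradp}) for smooth compactly supported approximants of $\eta_{j,k}$ against the uniform $\BS_{1,\infty}$ bound coming from Theorem \ref{thm:psifull}, and then passing to the limit by monotone convergence. In the paper the approximants are the explicit kernels $\boldsymbol\Upsilon[\d,R,\varepsilon]$, the uniform bound is packaged through Lemma \ref{lem:central} (existence of the limit of $\SfG_1(\iop(\boldsymbol\Upsilon[\d,R,\varepsilon]))$, which implies its boundedness), and the $\plainL2$-in-quadrature structure in \eqref{eq:H} appears exactly as you anticipate, through \eqref{eq:hm}--\eqref{eq:gradp} and the block-vector bound \eqref{eq:blockvec}; your appeal to Fatou/MCT plus Fubini to obtain a.e.\ finiteness of the fibre $\plainL2$-norms matches the paper's use of MCT in the inner integral \eqref{eq:hdr}. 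The one place your sketch is looser than the paper is in dispatching the parts of $\boldsymbol\SV$ outside the cut-off support: you assert these have singular values $o(1/k)$, whereas what Theorem \ref{thm:psifull} actually delivers is an $\BS_{1,\infty}$-bound whose constant tends to zero as the cut-off relaxes (so $\SfG_1$ of the error is small, not zero); combined with the quasi-triangle inequality \eqref{eq:trianglep} this still gives the needed uniform bound on $\SfG_1(\BM_n)$, so the gap is cosmetic rather than substantive. Also worth noting: for Theorem \ref{thm:etadiag} alone you only need boundedness of the approximating coefficients, whereas the paper proves the stronger convergence statement of Lemma \ref{lem:central} because it is also needed for Theorem \ref{thm:ttov}; your streamlining is legitimate for the stated theorem but would not suffice for the full asymptotic formula.
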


Now we are in a position to state the main result of the paper.
 
\begin{thm}\label{thm:maincompl}
Suppose that the eigenfunction $\psi$ satisfies the bound \eqref{eq:exp}. 
Then the eigenvalues $\l_k(\boldsymbol\ST), k = 1, 2, \dots,$ 
of the operator $\boldsymbol\ST$  satisfy the asymptotic formula \eqref{eq:mainas} 
with the constant 
\begin{align}\label{eq:coeffA}
B = \frac{4}{3\pi}
 \int_{\R^{3}} H(x)\, dx. 
\end{align} 
\end{thm}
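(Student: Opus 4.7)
I would follow the three-step scheme already sketched after the statement of \eqref{eq:mainas}. The first step is to pass from $\boldsymbol\ST$ to the compact operator $\boldsymbol\SV$ via the factorization $\boldsymbol\ST = \boldsymbol\SV^*\boldsymbol\SV$ from Subsect.~\ref{subsect:fact}, so that $\lambda_k(\boldsymbol\ST) = s_k(\boldsymbol\SV)^2$ and the target formula \eqref{eq:mainas} is equivalent to \eqref{eq:main1}. In the fermionic/bosonic case $\boldsymbol\SV$ is the integral operator with vector-valued kernel $\sqrt{N}\,\nabla_x\psi(\hat\bx, x)$; the general case from Subsect.~\ref{subsect:fact} is analogous modulo bookkeeping.

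The second step is to produce effective Birman--Solomyak bounds of the form $\limsup_k k\, s_k(b\,\boldsymbol\SV a) \le C(a,b)$ for weights $a = a(x)$, $b = b(\hat\bx)$, with $C(a,b)$ controllable by suitable weighted Sobolev norms of $\nabla_x\psi$ and vanishing when the weights shrink in an appropriate sense. These bounds rely on the singular value inequalities recorded in Subsect.~\ref{subsect:intop} together with the global regularity of $\psi$ away from the coalescence set (real analyticity plus the derivative estimates of \cite{FS2018}) and the exponential decay \eqref{eq:exp}. They will play two distinct roles in Step~3: eliminating the contribution of regions away from the pair coalescence diagonals, and absorbing the errors created by smooth approximations near those diagonals.

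The third and most delicate step is to compute the asymptotic constant. In a neighbourhood of each diagonal $\SU^{(\rm d)}_{j,k}$, I would apply Proposition \ref{prop:repr} to write $\psi = \xi_{j,k} + |x_j-x_k|\,\eta_{j,k}$ and differentiate in $x$, obtaining the three terms of \eqref{eq:locan}: a real-analytic piece (giving faster than polynomial decay, hence no contribution), a homogeneity-one piece $|x_j-x_k|\nabla_x\eta_{j,k}$ (which produces $o(k^{-1})$), and a critical homogeneity-zero piece carrying the angular factor $(x_j-x_k)/|x_j-x_k|$. For $N=2$ this last piece is a classical homogeneous integral operator whose $k^{-1}$ asymptotics are supplied by Birman--Solomyak theory, leading to $B = \frac{4}{3\pi}\int |\sqrt{2}\,\eta_{1,2}(x,x)|\,dx$. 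For $N\ge 3$ the several coalescence diagonals interact, and I would match the local kernel against the model operator of Subsect.~\ref{subsect:model} with homogeneity zero; its asymptotic constant assembles the $\plainL2$-norm over $\tilde\bx_{j,k}$ of $\sqrt{2}\,\eta_{j,k}(\tilde\bx_{j,k}, x, x)$, producing precisely $H(x)$ from \eqref{eq:H} and the formula \eqref{eq:coeffA}.

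The main obstacle is closing the approximation loop in Step~3. The functions $\xi_{j,k}, \eta_{j,k}$ are only known to be real analytic on the open set $\Om_{j,k}$, with no a priori control on the closure $\overline{\Om_{j,k}}$ nor on the integrability of the trace $\eta_{j,k}(\tilde\bx_{j,k}, x, x)$ on the diagonal. One therefore replaces them by $\plainC\infty_0$-approximations supported inside $\Om_{j,k}$, applies the model-operator asymptotics to the approximation, and uses the effective bounds of Step~2 to show that the induced error in the leading constant tends to zero as the approximations converge. Remarkably, executing this argument not only delivers \eqref{eq:main1} with $B$ as in \eqref{eq:coeffA}, but also forces the integrability $H \in \plainL1(\R^3)$ asserted in Theorem \ref{thm:etadiag} as a byproduct of the spectral bounds.
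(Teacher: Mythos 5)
Your proposal reproduces the paper's proof strategy essentially verbatim: factorization $\boldsymbol\ST = \boldsymbol\SV^*\boldsymbol\SV$, weighted Birman--Solomyak bounds for $\boldsymbol\SV$ (Theorem~\ref{thm:psifull}), splitting of the kernel via Proposition~\ref{prop:repr} into a smooth part, a homogeneity-one part (negligible by Proposition~\ref{prop:gradp}), and the critical homogeneity-zero part, matched against the model operator of Subsect.~\ref{subsect:model} after truncating by $\plainC\infty_0$-cutoffs inside $\Om_{j,k}$ and controlling the error with the effective bounds (Lemma~\ref{lem:central}). You also correctly identify that $H\in\plainL1(\R^3)$ emerges as a byproduct, as in the paper's simultaneous proof of Theorem~\ref{thm:etadiag}.
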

  
A few remarks are in order.   
%  
%  
%\begin{remark} 
\begin{enumerate}
\item 
Theorem \ref{thm:maincompl} extends to the case of a 
molecule with several nuclei whose positions are fixed. 
The modifications are straightforward.
\item 
The formula \eqref{eq:coeffA} shows that the spectral asymptotics 
depend only on the behaviour of the eigenfunction $\psi$ near the pair coalescence points 
$x_j=x_k$, $j, k = 1, 2, \dots, N$, $j\not = k$.  
Neither the points  
$x_j = 0, j = 1, 2, \dots, N$, nor 
the coalescence points of higher orders (e.g. $x_j = x_k = x_l$ with pair-wise distinct $j, k, l$)   
contribute to the asymptotics \eqref{eq:mainas}. 
\item 
For the sake of comparison we give here the formula for the coefficient 
$A$ in the asymptotic formula \eqref{eq:bolg} for the operator $\BOLG$. 
As shown in \cite[Theorem 2.2]{Sobolev2021}, the function $H$ also belongs to $\plainL{3/4}(\R^3)$ 
and 
\begin{align*}
A = \frac{1}{3}\bigg(\frac{2}{\pi}\bigg)^{\frac{5}{4}}
 \int_{\R^{3}} H(x)^{\frac{3}{4}}\, dx. 
\end{align*} 
Thus both asymptotic formulas \eqref{eq:bolg} and \eqref{eq:mainas} are determined only by 
the functions $\eta_{j, k}$.
\item 
Both coefficients $A$ and $B$ may be equal to zero. 
Indeed, if the particles 
are spinless fermions, i.e. the eigenfunction $\psi$ 
is fully antisymmetric, 
then so are the functions $\xi_{j, k}$ and $\eta_{j, k}$ in 
\eqref{eq:repr}, and hence $\eta_{j, k}(\tilde\bx_{j, k}, x, x) = 0$ for 
all $\bx\in\SU_{j, k}^{(\dc)}$, which means that $H(x) = 0$, a.e. $x\in\R^3$. 

If we introduce the spin, then the parity of the functions $\xi_{j, k}$ and $\eta_{j, k}$ depends 
on the configuration formed by the pair $j, k$, see \cite[Subsect. 3.3]{HaKlKoTe2012}. If $j$ and $k$ 
are in a \textit{triplet} configuration, then 
$\xi_{j, k}$ and $\eta_{j, k}$ are antisymmetric with respect to the permutation $j\leftrightarrow k$  and hence $\eta_{j, k}(\tilde\bx_{j, k}, x, x) = 0$, as before. 
If however, $j$ and $k$ form a \textit{singlet}, 
then $\xi_{j, k}$ and $\eta_{j, k}$ are 
symmetric with respect to the permutation $j\leftrightarrow k$ and 
$\eta_{j, k}(\tilde\bx_{j, k}, x, x)$ is not identically zero on $\SU_{j, k}^{(\dc)}$, 
so both $A$ and $B$ are positive. 
A somewhat more detailed discussion can be found in \cite[Remark 2.4]{Sobolev2021}.

For completeness one should say that a preliminary analysis shows that 
in the totally antisymmetric case (i.e. when all pairs are triplets and hence $A=B=0$) 
the eigenvalues of $\BOLG$ and 
$\boldsymbol\ST$ should satisfy the bounds 
$\l_k(\BOLG) = O(k^{-10/3})$ and $\l_k(\boldsymbol\ST) = O(k^{-8/3})$. Proof of these 
bounds will make part of another publication.  

\end{enumerate}
%\end{remark}  
 
\subsection{Compact operators} \label{subsect:compact} 
Here we provide necessary information about compact operators. 
Most of the listed facts can be found in \cite[Chapter 11]{BS}. 
Let $\CH$ and $\mathcal G$ be separable Hilbert spaces.  
Let $A:\CH\to\mathcal G$ be a compact operator. 
If $\CH = \mathcal G$ and $A=A^*\ge 0$, then $\l_k(A)$, $k= 1, 2, \dots$, 
denote the positive eigenvalues of $A$ 
numbered in descending order counting multiplicity. 
For arbitrary spaces $\CH$, $\mathcal G$ and compact $A$, by $s_k(A) >0$, 
$k= 1, 2, \dots$, we denote the singular values of 
$A$ defined by $s_k(A)^2 = \l_k(A^*A) = \l_k(AA^*)$.   
If $s_k(A)\lesssim k^{-1/p}, k = 1, 2, \dots$, 
with some $p >0$, then we say that $A\in \BS_{p, \infty}$ and denote
\begin{align*}%\label{eq:quasi}
\| A\|_{p, \infty} = \sup_k s_k(A) k^{\frac{1}{p}},
\end{align*}
so that $\|A^*A\|_{p/2, \infty} = \|A A^*\|_{p/2, \infty} = \|A\|_{p, \infty}^2$.
The class $\BS_{p, \infty}$ is a complete linear space with the quasi-norm $\|A\|_{p, \infty}$. 
For all $p>0$ the functional $\|A\|_{p, \infty}$ 
satisfies the following ``triangle" inequality for  
operators $A_1, A_2\in\BS_{p, \infty}$:
\begin{align}\label{eq:triangle}
\|A_1+A_2\|_{\scalet{p}, \scalel{\infty}}^{{\frac{\scalel{p}}{\scalet{p+1}}}}
\le \|A_1\|_{\scalet{p}, \scalel{\infty}}^{{\frac{\scalel{p}}{\scalet{p+1}}}}
+ \|A_2\|_{\scalet{p, \infty}}^{{\frac{\scalel{p}}{\scalet{p+1}}}}.
\end{align}  
This inequality allows us to estimate quasi-norms of ``block-vector" operators. 
Let $A_j\in\BS_{p, \infty}$ 
be a finite collection of compact operators. Define the operator $\BA:\CH\to\oplus_j \mathcal G$ by  
$\BA = \{A_j\}_{j}$.
Since 
\begin{align*}
\BA^* \BA = \sum_j A_j^*A_j,  
\end{align*}
and $A_j^*A_j\in\BS_{q, \infty}$, $q = p/2$, 
by \eqref{eq:triangle} we have 
\begin{align}\label{eq:blockvec}
\|\BA\|_{p, \infty}^{{\frac{\scalel{2p}}{\scalet{p+2}}}} 
= \|\BA^*\BA\|_{q, \infty}^{{\frac{\scalel{q}}{\scalet{q+1}}}}
\le  \sum_j\|A_j^*A_j\|_{q, \infty}^{{\frac{\scalel{q}}{\scalet{q+1}}}} = 
\sum_j\|A_j\|_{p, \infty}^{{\frac{\scalel{2p}}{\scalet{p+2}}}}.
\end{align}
Consequently, in order to estimate the singular values of $\BA$ it suffices to 
estimate those of its components $A_j$. We use this fact throughout the paper. 
 
For $p\in (0, 1)$ it is often more convenient to use the following ``triangle" inequality for  
arbitrarily many operators $A_j\in\BS_{p, \infty}$, $j = 1, 2, \dots$:
\begin{align}\label{eq:ptriangle}
\big\|\sum_{j} A_j\big\|_{p, \infty}^p\le (1-p)^{-1}
\sum_j \|A_j\|_{p, \infty}^p,
\end{align}
see 
\cite[Lemmata 7.5, 7.6]{AJPR2002}, \cite[\S 1]{BS1977} and references therein. Under 
the additional assumption
\begin{align}\label{eq:orthog}
A_k A_j^* = 0\quad \textup{or}\quad A_k^*A_j = 0,\quad j\not = k, 
\end{align}   
the inequality of the form \eqref{eq:ptriangle} extends to all $p\in (0, 2)$: 

\begin{prop}\cite[Lemma 1.1]{BS1977}\label{prop:orthog}
Let $p\in (0, 2)$ and let $A_j\in \BS_{p, \infty}$, $j= 1, 2, \dots$, 
be a family of operators satisfying 
\eqref{eq:orthog}. 
Then for the operator $A = \sum_j A_j$ we have the inequality
\begin{align}\label{eq:pnorm}
\|A\|_{\scalet{p}, \scalel{\infty}}^{\scalet{p}}\le 
\frac{2}{2-p}\,
\sum_j \|A_j\|_{\scalet{p}, \scalel{\infty}}^{\scalet{p}}. 
\end{align} 
\end{prop}

\begin{proof} 
It suffices to conduct the proof under the first condition in 
\eqref{eq:orthog} only, so that  
\begin{align*}
A A^* = \sum_j A_j A_j^*. 
\end{align*}
Since $A_j A_j^*\in\BS_{q, \infty}, q = p/2<1$, 
and $\|A_j A_j^* \|_{q, \infty} = \|A_j\|_{p, \infty}^2$, 
the inequality \eqref{eq:ptriangle} leads to the bound 
\begin{align*}
\|A\|_{p, \infty}^p = \|A A^*\|_{q, \infty}^q \le (1-q)^{-1} 
\sum_j \|A_j A_j^*\|_{q, \infty}^q
= (1-p/2)^{-1}\sum_j \|A_j\|_{p, \infty}^p. 
\end{align*}
This inequality coincides with \eqref{eq:pnorm}.
\end{proof}

Let us now introduce the functionals describing the asymptotic behaviour of singular values. 
For $A\in \BS_{p, \infty}$ the following quantities are finite:
\begin{align}\label{eq:limsupinf}
\begin{cases}
\SfG_p(A) = 
\big(\limsup\limits_{k\to\infty} k^{\frac{\scalel{1}}{\scalel{p}}}s_k(A)\big)^{p},\\[0.3cm]
\sg_p(A) =  
\big(\liminf\limits_{k\to\infty} k^{\frac{\scalel{1}}{\scalel{p}}}s_k(A)\big)^{p},
\end{cases}
\end{align}
and they clearly satisfy the inequalities
\begin{align} \label{eq:gnorm}
\sg_p(A)\le \SfG_p(A)\le \|A\|_{p, \infty}^p.
\end{align}
Note that if 
$A\in\BS_{p, \infty}$, then 
$\SfG_q(A) = 0$ for all $q > p$. 
Observe that 
\begin{align}\label{eq:double}
\sg_{p/2}(A A^*) = \sg_{p/2}(A^* A) = \sg_{p}(A),\quad 
\SfG_{p/2}(A A^*) = \SfG_{p/2}(A^* A) = \SfG_{p}(A).
\end{align}
If $\SfG_p(A) = \sg_p(A)$, then the singular values of $A$ satisfy the asymptotic formula
\begin{align*} 
s_k(A) = \big(\SfG_p(A)\big)^{\frac{1}{p}} k^{-\frac{1}{p}} + o(k^{-\frac{1}{p}}),\ k\to\infty.
\end{align*}
The functionals $\sg_p(A)$, $\SfG_p(A)$ 
also satisfy the inequalities of the type \eqref{eq:triangle}:
\begin{align}\label{eq:trianglep}
\begin{cases}
\SfG_p(A_1 + A_2)^{{\frac{\scalel{1}}{\scalet{p+1}}}}
\le \SfG_p(A_1)^{{\frac{\scalel{1}}{\scalet{p+1}}}}
+ \SfG_p(A_2)^{{\frac{\scalel{1}}{\scalet{p+1}}}},\\[0.3cm]
\sg_p(A_1+A_2)^{{\frac{\scalel{1}}{\scalet{p+1}}}}
\le \sg_p(A_1)^{{\frac{\scalel{1}}{\scalet{p+1}}}}
+ \SfG_p(A_2)^{{\frac{\scalel{1}}{\scalet{p+1}}}}.
\end{cases}
\end{align}
It follows from \eqref{eq:trianglep} that 
the functionals $\SfG_p$ and $\sg_p$ are continuous on $\BS_{p, \infty}$:
\begin{align*}
\big|
\SfG_p(A_1)^{{\frac{\scalel{1}}{\scalet{p+1}}}} - \SfG_p(A_2)^{{\frac{\scalel{1}}{\scalet{p+1}}}}
\big|\le &\ \SfG_p(A_1 - A_2)^{{\frac{\scalel{1}}{\scalet{p+1}}}},\\
\big|\sg_p(A_1)^{{\frac{\scalel{1}}{\scalet{p+1}}}} - \sg_p(A_2)^{{\frac{\scalel{1}}{\scalet{p+1}}}}
\big|\le &\ \SfG_p(A_1 - A_2)^{{\frac{\scalel{1}}{\scalet{p+1}}}}.
\end{align*} 
We need the following two corollaries of this fact:

\begin{cor}\label{cor:zero}
Suppose that $\SfG_p(A_1-A_2) = 0$. Then 
\begin{align*}
\SfG_p(A_1) = \SfG_p(A_2),\quad \sg_p(A_1) = \sg_p(A_2).
\end{align*}
\end{cor}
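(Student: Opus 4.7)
The corollary is a direct consequence of the two continuity inequalities displayed immediately above its statement, which in turn follow from the quasi-triangle inequalities \eqref{eq:trianglep}. My plan is therefore simply to substitute the hypothesis $\SfG_p(T_1-T_2) = 0$ into those bounds.

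More concretely, I would first observe that since the singular values satisfy $s_k(T) = s_k(-T)$, the functional $\SfG_p$ is insensitive to sign, so $\SfG_p(T_2 - T_1) = \SfG_p(T_1 - T_2) = 0$. Applying the first continuity bound
\begin{align*}
\bigl|\SfG_p(T_1)^{\frac{1}{p+1}} - \SfG_p(T_2)^{\frac{1}{p+1}}\bigr|
\le \SfG_p(T_1-T_2)^{\frac{1}{p+1}} = 0
\end{align*}
yields $\SfG_p(T_1)^{1/(p+1)} = \SfG_p(T_2)^{1/(p+1)}$, and raising both sides to the $(p+1)$-th power gives $\SfG_p(T_1)=\SfG_p(T_2)$. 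The equality $\sg_p(T_1) = \sg_p(T_2)$ follows in exactly the same way from the second continuity bound.

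There is no real obstacle here; the content of the corollary lies entirely in the continuity statements already in hand, and the only thing to notice is that $\SfG_p$ is unchanged under the replacement $T \mapsto -T$ so that the roles of $T_1$ and $T_2$ may be swapped symmetrically. Accordingly the proof will amount to one or two lines.
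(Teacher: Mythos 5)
Your proof is correct and is precisely the argument the paper intends: the corollary is stated as an immediate consequence of the two continuity inequalities derived from \eqref{eq:trianglep}, and the paper gives no further details. Your observation that $\SfG_p(T_2-T_1)=\SfG_p(T_1-T_2)$ (since $s_k(-T)=s_k(T)$) is exactly what is needed to make the symmetry underlying those continuity bounds explicit.
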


The next corollary is more general:

\begin{cor}\label{cor:zero1}
Suppose that $A\in\BS_{p, \infty}$ and that for every $\nu>0$ there exists an operator 
$A_\nu\in \BS_{p, \infty}$ such that $\SfG_p(A - A_\nu)\to 0$, 
$\nu\to 0$. Then the functionals 
$\SfG_p(A_\nu), \sg_p(A_\nu)$ have limits as $\nu\to 0$ and 
\begin{align*}
\lim_{\nu\to 0} \SfG_p(A_\nu) = \SfG_p(A),\quad 
\lim_{\nu\to 0} \sg_p(A_\nu) = \sg_p(A).
\end{align*}
\end{cor}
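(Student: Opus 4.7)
The plan is to read off this corollary as a direct reformulation of the continuity estimates
\begin{align*}
\bigl|\SfG_p(T_1)^{1/(p+1)} - \SfG_p(T_2)^{1/(p+1)}\bigr| &\le \SfG_p(T_1 - T_2)^{1/(p+1)}, \\
\bigl|\sg_p(T_1)^{1/(p+1)} - \sg_p(T_2)^{1/(p+1)}\bigr| &\le \SfG_p(T_1 - T_2)^{1/(p+1)}
\end{align*}
stated just before the corollary (which themselves descend from \eqref{eq:trianglep}). I would specialize these with $T_1 = T$ and $T_2 = T_\nu$, both of which lie in $\BS_{p,\infty}$ by hypothesis, so the quantities on the left-hand sides are finite by the chain $\sg_p(S)\le \SfG_p(S)\le \|S\|_{p,\infty}^p$. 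The hypothesis $\SfG_p(T - T_\nu) \to 0$ as $\nu \to 0$ then forces
\begin{align*}
\SfG_p(T_\nu)^{1/(p+1)} \to \SfG_p(T)^{1/(p+1)}, \qquad \sg_p(T_\nu)^{1/(p+1)} \to \sg_p(T)^{1/(p+1)}.
\end{align*}
Since $t \mapsto t^{p+1}$ is continuous on $[0, \infty)$, raising each side to the power $p+1$ yields $\SfG_p(T_\nu) \to \SfG_p(T)$ and $\sg_p(T_\nu) \to \sg_p(T)$, which is the claim.

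There is essentially no obstacle: the whole argument is a one-line application of the triangle-type inequalities \eqref{eq:trianglep}, exactly as in the derivation of Corollary \ref{cor:zero}, with the only difference being that a single pair $(T_1, T_2)$ is now replaced by a net $(T, T_\nu)$. Corollary \ref{cor:zero1} merely packages the same continuity property in a form convenient for approximation arguments. It will be invoked later with $T$ equal to $\boldsymbol\SV$ (or one of its localized summands) and $T_\nu$ obtained by replacing the kernel $\SV$ with a $\plainC\infty_0$-approximation inside the domains $\Om_{j,k}$ of Proposition \ref{prop:repr}; this will allow $\SfG_p(\boldsymbol\SV)$ and $\sg_p(\boldsymbol\SV)$ to be computed as limits of the corresponding quantities for the model operators whose spectral asymptotics are supplied in Subsect. \ref{subsect:model}.
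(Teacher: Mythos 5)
Your proof is correct and is exactly the argument the paper has in mind: the paper states Corollary~\ref{cor:zero1} without a separate proof precisely because it is an immediate consequence of the continuity estimates displayed just above it, together with continuity of $t\mapsto t^{p+1}$. Your write-up just makes that one-line deduction explicit.
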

 
Remark that using the functionals \eqref{eq:limsupinf} one can rewrite the asymptotics 
\eqref{eq:mainas} in the form
\begin{align}\label{eq:mainas1}
\SfG_{1/2}(\boldsymbol\ST) = \sg_{1/2}(\boldsymbol\ST) = B.
\end{align}

\section{Singular values of integral operators}\label{sect:intop}

In this section we collect some basic results on singular values of integral operators 
that will be instrumental in the proof of the main Theorem 2.4. 
The first such result is borrowed from \cite{BS1977} and it describes    
the membership of integral operators 
in classes $\BS_{p, \infty}$ with some $p>0$ in terms of the smoothness of 
their integral kernels. 
More specifically, we need the bounds in  
$\BS_{p, \infty}$ under the assumption that the integral kernel 
lies in a certain \emph{Besov space}.   
Thus we start with  
basic facts about Besov 
spaces. More details can be found 
in \cite[Chapter 4]{Nikol1975},\cite[Sect. 5.3]{Burenkov1998}, 
\cite[Sect. V.5]{Stein1970} and \cite[Ch. 4]{Triebel1978}. 
 
\subsection{Besov spaces} \label{subsect:nikol} 
For a function $u = u(x)$, $x\in \R^d$, and arbitrary $l = 0, 1, 2, \dots,$ 
define the finite difference  
\begin{align}\label{eq:findif}
\Delta_h^{(l)} u (x) = \sum_{j=0}^l (-1)^{j+l} {l \choose j} u(x+jh),
\end{align}
and the \emph{$\plainL{q}$-modulus of smoothness of order} $l$:
\begin{align}\label{eq:ms}
\om_q^{(l)}(u; t) = \sup_{|h|\le t} \|\Delta_h^{(l)}u\|_{\plainL{q}},\quad t >0.
\end{align}
The function $u$ is said to belong to the Besov space $\plainB^{s}_{q, \infty}(\R^d)$,  
$s>0$, $q\in [1, \infty]$, if 
for some $l > s$ we have 
\begin{align}\label{eq:besov}
\|u\|_{\plainB^{s}_{q, \infty}} := \|u\|_{\plainL{q}} + \sup_{t>0} t^{-s}\om_q^{(l)}(u; t) < \infty.
\end{align}
This quantity defines a norm on 
$\plainB^{s}_{q, \infty}(\R^d)$. 
Different values of $l > s$ lead to equivalent norms \eqref{eq:besov}, 
see \cite[Sect. 5.3]{Burenkov1998}. 
The space $\plainB^{s}_{q, \infty}(\R^d)$ is just one representative of the whole scale 
of Besov spaces $\plainB^{s}_{q, r}(\R^d)$, $0 <r \le \infty$. In this paper we need only 
$r = \infty$. 
Following \cite[Chapter 4]{Nikol1975} we use for the space $\plainB^{s}_{q, \infty}$ 
the notation  $\plainH{s}_{q}$ and write 
$\|u\|_{\plainH{s}_q}$ for the norm \eqref{eq:besov}. Sometimes in the literature this space is called 
the \emph{Nikol'skii space}, see e.g. \cite{BS1977}. 

Note that 
\begin{align}\label{eq:iter}
\Delta_h^{(l)} u(x) = \Delta_h^{(l-m)} (\Delta_h^{(m)} u)(x)
= \sum_{j=0}^{l-m}(-1)^{j+l-m}{l-m \choose j} \Delta^{(m)}_hu(x+jh),
\end{align}
for all $m < l$.
Thus, under the assumption 
$u\in \plainW{l, 1}_{\textup{loc}}(\R^d)$, 
iterating the identity
\begin{align*}
\Delta_h^{(1)} u(x) = u(x+h)- u(x) = \int_0^1 h\cdot\nabla u(x+sh) ds,
\end{align*}
we obtain the formula
\begin{align}\label{eq:fintodiff}
\Delta_h^{(l)} u(x) = \int_0^1 \int_0^1 \cdots \int_{0}^1 
(h\cdot \nabla)^l u\big( x+ \sum_{j=1}^l s_j h\big)\, ds_1 ds_2\dots ds_l.
\end{align}
For a domain $\Om\subset\R^d$ the space $\plainH{s}_q(\Om)$ is defined as the restriction 
of $\plainH{s}_q(\R^d)$ to $\Om$. The corresponding norm of the function 
$u\in \plainH{s}_q(\Om)$ is defined as 
\begin{align}\label{eq:restr}
\|u\|_{\plainH{s}_q(\Om)} = \inf \|g\|_{\plainH{s}_q(\R^d)},
\end{align}
where the infimum is taken over all functions $g\in \plainH{s}_q(\R^d)$ such 
that $u=g$ for a.e. $x\in\Om$. In other words, a function $u$ belongs to 
$\plainH{s}_q(\Om)$ if it has an extension $g\in \plainH{s}_q(\R^d)$. 
Here we have adopted the definitions from \cite[Sect. 4.2.1]{Triebel1978}. 
They will be sufficient for our purposes.  

There are also intrinsic (i.e. not based on restriction) definitions of the Nikol'skii and Besov 
spaces on domains, see for example \cite[Sect. 4.3]{Nikol1975} or \cite[Section 7.32]{Adams_Fournier_2003}. 
For domains 
with ``good" extension properties the intrinsic and restriction definitions are equivalent, see 
\cite[Sect. 7.32]{Adams_Fournier_2003}.  
The collection of such domains includes 
bounded domains satisfying the cone condition (see \cite[Sect. 4.2.3]{Triebel1978}), 
and in particular, bounded Lipschitz domains.

We are interested in functions that fall in the appropriate spaces $\plainH{s}_q$  in a natural way. 
Let $u\in \plainC\infty(\R^d\setminus X)$, where $X = \{a_1, a_2, \dots, a_N\}$ is a finite collection of points $a_k\in \R^d$, $k = 1, 2, \dots, N$. 
Assume that  
for some $\a \in\R$ and some $A\ge 0$ we have 
\begin{align}\label{eq:sing}
|\nabla_x^j u(x)|\lesssim A\big(1+\sum_{k=1}^N \big(|x-a_k|\wedge 1\big)^{\a-j}\big),
\quad \textup{for all} \quad x\notin X\quad \textup{and}
\quad j = 0, 1, \dots,
\end{align} 
where, as explained in the Introduction, $a\wedge b = \min(a, b)$ for any two non-negative $a$ and $b$. 
A good example that ``almost" satisfies this condition 
with $X = \{0\}$ is the class of homogeneous functions, i.e.   
functions 
$\Phi\in\plainC\infty(\R^d\setminus\{0\} )$ such that $\Phi(tx) = t^{\a} \Phi(x)$ for all
$t >0$ and all $x\in \R^d, x\not = 0$, with some $ \a\in\R$. As follows from 
\cite[Lemma 10.1]{BS1977}, if $\a > -d/q$, then $\Phi\in \plainH{s}_q(B)$, $s = \a + d/q$, for 
any ball $B\in\R^d$. 
In the next lemma we show the same for functions satisfying \eqref{eq:sing}.  
This result is elementary but we have not been able to find it in the literature.  
%
%Below and further on we use the notation \eqref{eq:nabla}.  
% 
  
\begin{lem}  \label{lem:nik} 
Suppose that $d\ge 2$ and that the function $u$ satisfies \eqref{eq:sing}.  
If $\a > -d/q$ with some $1\le q\le d$, then 
for all $x_0\in\R^d$ and $R>0$ the function $u$ belongs to $\plainH{s}_q(B)$,  $B = B(x_0, R)$, 
with $s = \a + d/q$.
Moreover, $\|u\|_{\plainH{s}_q(B)}\lesssim A$ where the implicit constant 
does not depend on $x_0$ and the set $X$, but may depend on the radius $R$.  
\end{lem}

We precede the proof with the following useful remark.

\begin{rem}\label{rem:remove}
Define the integer $m = m(\a)$: 
\begin{align}\label{eq:mit}
\begin{cases}
m = 0, \quad \textup{if} \quad \a\le 0,\\[0.2cm]
m = [\a]+1, \quad \textup{if}\quad \a>0, \a\notin \mathbb Z,\\[0.2cm]
m = \a, \quad \textup{if}\quad \a>0, \a\in\Z.
\end{cases}
\end{align}
Under the assumption $d\ge 2$ the potential singularities of the function $u$ 
at the points $a_k, k= 1, 2, \dots, N$, 
are ``removable" in the following sense. 
If $\a >0$ (i.e. $m\ge 1$) the function $u$ clearly belongs to 
$\plainW{m, d}_{\textup{loc}}(\R^d\setminus X)$. Thus it follows from 
Lemma \ref{lem:remove} that for $d\ge 2$ the function $u$ also belongs to 
$\plainW{m, d}_{\textup{loc}}(\R^d)$. 
\end{rem}

\begin{proof}[Proof of Lemma \ref{lem:nik}] 
In view of the definition \eqref{eq:restr} we may assume that 
$u$ is supported on the ball $B(x_0, 2R)$ and shall prove that 
$\|u\|_{\plainH{s}_q(\R^d)}\lesssim A$. 
Since $\a > -d/q$, we have $u\in\plainL{q}(\R^d)$ and $\|u\|_{\plainL{q}}\lesssim A$. 
It remains to 
estimate the modulus of smoothness  \eqref{eq:ms}.

Let 
\begin{align}\label{eq:el}
s = \a + d/q\quad \textup{and} \quad l = [s]+1.
\end{align} 
We shall show that 
\begin{align}\label{eq:ms2}
\|\Delta_h^{(l)} u \|_{\plainL2}\lesssim A\,
%\min\{
(1\wedge|h|^s),\quad s = \a+\frac{d}{2},
\end{align} 
for all $h\in\R^d$, 
with an implicit constant independent of $x_0$ and of the set $X$. 
If $|h|\ge 1$, then by the definition \eqref{eq:findif}, 
\begin{align*}
\|\Delta_h^{(l)} u \|_{\plainL{q}}\le 2^l\|u\|_{\plainL{q}}\lesssim A,
\end{align*}
so that it remains to prove \eqref{eq:ms2} for $|h|\le 1$.

Denote by $\chi_k(x; r), r>0,$ the indicator of the ball $B(a_k, r)$, and define 
$\kappa_k(x; r)= 1-\chi_k(x; r)$. 
The first step is to prove that 
\begin{align}\label{eq:inside}
\|\chi_k(\ \cdot\ ; 2l|h|) \Delta_h^{(l)} u \|_{\plainL{q}}\lesssim A\, |h|^s,\quad 
k = 1, 2, \dots, N.
\end{align}
For the finite difference $\Delta_h^{(l)} u$ we use the representation \eqref{eq:iter} 
where $m$ is as defined in \eqref{eq:mit}. 
%
%
%so that $-1<\a-m\le 0$. 
%
In view \eqref{eq:iter}, we have 
\begin{align*}
|\chi_k(x ; 2l|h|)\, \Delta_h^{(l)} u(x) |
\lesssim \sum_{j=0}^{l-m}|\chi_k(x ; 2l|h|)\,  \Delta_h^{(m)} u(x+jh)|.
\end{align*}
Estimate $\plainL{q}$-norms of each term on the right-hand side individually. 
For $m = 0$ we have 
\begin{align}\label{eq:deltam0}
\|\chi_k(\, \cdot\, ; 2l|h|)\,  \Delta_h^{(0)} & u(\ \cdot\ + jh)\|_{\plainL{q}}^q\notag\\
\le &\ 
\int_{\R^d}
\chi_k(x ; 2l|h|)\, \big|u\big(x+ jh\big)\big|^q dx\notag\\
\le &\  \int_{\R^d} \chi_k(x ; 4l|h|)\,
\big|u(x)\big|^q dx\,,  
\end{align}
for all $j = 0, 1, \dots, l$. 

If $m\ge 1$, then in view of Remark 
\ref{rem:remove} 
we have $u\in \plainW{m, d}_{\textup{loc}}(\R^d)$. Therefore, using 
\eqref{eq:fintodiff}, for arbitrary $q \in [1, d]$ we can estimate: 
\begin{align}\label{eq:deltam}
\|\chi_k(\, \cdot\, ; & 2l|h|)\,  \Delta_h^{(m)}   u(\ \cdot\ + jh)\|_{\plainL{q}}^q\notag\\
\le &\ |h|^{qm}  \int_0^1 \int_0^1 \cdots \int_{0}^1\int_{\R^d}
\chi_k(x ; 2l|h|)\, \big|\nabla^m 
u\big(x+ jh + \sum_{n=1}^m s_n h\big)\big|^q dx\, ds_1 ds_2 \dots ds_n\notag\\
\le &\ |h|^{qm}  \int_{\R^d} \chi_k(x ; 4l|h|)\,
\big|\nabla^m u(x)\big|^q dx\,,  
\end{align}
for all $j = 0, 1, \dots, l-m$, where we have used the notation \eqref{eq:nabla}. 
Now use \eqref{eq:sing}, % remembering that $\a-m\le 0$
so that the right-hand side 
of \eqref{eq:deltam0} (for $m=0$) and \eqref{eq:deltam} (for $m\ge 1$) is bounded by 
\begin{align}\label{eq:polar}
|h|^{qm}\,A^q\,\sum_{n=1}^N\int\limits_{|x-a_k|<4l|h|} \big(1+&\ |x-a_n|^{q(\a-m)}\big)\,  dx
\notag\\[0.2cm]
\lesssim &\ |h|^{qm+d}\,A^q + |h|^{qm}\,A^q\,\sum_{n=1}^N\int\limits_{|x-a_k|<4l|h|} |x-a_n|^{q(\a-m)}\,dx.
\end{align}
%
%Here we have used the fact that $\a-m\le 0$, so $(|x-a_n|\wedge 1)^{\a-m}\le 1 + |x-a_n|^{\a-m}$. 
%
Since $\a-m>-1$ and $d\ge q$, each integral in the above sum is finite.  
To estimate each of the integrals consider two cases. If $a_n\in B(a_k, 5l|h|)$, then 
\begin{align*}
\int\limits_{|x-a_k|<4l|h|} |x-a_n|^{q(\a-m)}\,dx 
\lesssim \int\limits_{|x-a_n|<10l|h|} |x-a_n|^{q(\a-m)}  dx 
\lesssim |h|^{q\a-qm+d}. 
\end{align*}
If $a_n\notin B(a_k, 5l|h|)$, then $|x - a_n|\ge l|h|$ for $x\in B(a_k, 4l|h|)$, 
and as $\a-m \le 0$, we have again
\begin{align*}
\int\limits_{|x-a_k|<4l|h|} |x-a_n|^{q(\a-m)}\, dx 
\lesssim 
%
%(l|h|)^d + (l|h|)^{q(\a-m)} 
%
(l|h|)^{q(\a-m)}\,
\int\limits_{|x-a_k|<4l|h|} dx
\lesssim |h|^{q\a-qm+d}. 
\end{align*}
Substituting the last two bounds in \eqref{eq:polar} we conclude that 
%
%Consequently, 
%
the right-hand side of \eqref{eq:deltam} is bounded from above 
by $A^q\,|h|^{q\a+d}= A^q\,|h|^{qs}$, which leads to 
\eqref{eq:inside}. 

Now, denote 
\begin{align*}
\kappa(x; r) = \prod_k \kappa_k(x; r),\, \kappa_k(x; r) = 1 - \chi_k(x; r),
\end{align*}
and prove that  
\begin{align}\label{eq:outside}
\|\kappa(\ \cdot\ ; 2l|h|)\Delta_h^{(l)}u\|_{\plainL{q}}\lesssim A\,|h|^{s},
\end{align}
where the number $l$ is defined in \eqref{eq:el}. 
The function $u$ is $\plainC\infty$ on the support of the function $\kappa$. Therefore, 
similarly to \eqref{eq:deltam}, 
\begin{align*}
\|\kappa(\ \cdot\ ; 2l|h|)\, \Delta_h^{(l)} & u\|_{\plainL{q}}^q\\
\le &\ |h|^{ql}  \int_0^1 \int_0^1 \cdots \int_{0}^1\int_{\R^d}
\kappa(x; 2l|h|)\,\big|\nabla^l 
u\big(x+ \sum_{n=1}^l s_n h\big)\big|^q dx\, ds_1 ds_2 \dots ds_n\\
\le &\ |h|^{ql}  \int 
\kappa(x; l|h|)
\, \big|\nabla^l u(x)\big|^q dx\,. 
\end{align*}
Since $ql > q\a+d$, by \eqref{eq:fintodiff}, the right-hand side does not exceed 
\begin{align*}
|h|^{q l}\, A^q\,\sum_{k=1}^N  \int\limits_{\substack{|x-a_k|> l|h|\\ x\in B(x_0, 2R)}} 
\big(1+|x-a_k|^{q(\a-l)}\big) \, dx\lesssim \, 
|h|^{ql}A^q\big(1 + |h|^{q\a-ql + d}\big)
\lesssim 
A^q\,|h|^{qs}.
\end{align*}
This implies \eqref{eq:outside}. 

Using the inequality $1\le \sum_{k=1}^N \chi_k(x; r) + \kappa(x; r)$, we can estimate
\begin{align*}
\|\Delta_h^{(l)}u\|_{\plainL{q}}^q
\le \sum_{k=1}^N\|\chi_k(\ \cdot\ ; 2l|h|)\, \Delta_h^{(l)}u\|_{\plainL{q}}^q 
+ \|\kappa(\ \cdot\ ; 2l|h|)\Delta_h^{(l)}u\|_{\plainL{q}}^q,
\end{align*}
and hence putting \eqref{eq:inside} and \eqref{eq:outside} together we obtain \eqref{eq:ms2}. 
Along with the fact that $\|u\|_{\plainL{q}}\lesssim A$ this implies that 
$\|u\|_{\plainH{s}_q}\lesssim A$ as required. 
\end{proof}

\subsection{Bounds for singular values of integral operators}\label{subsect:BS}
Let $X\subset \R^d$, be a bounded Lipschitz domain 
and let $a = a(x),\, x\in X$ and $b = b(t),\, t\in \R^l$, be some functions that we call 
weights. We consider the integral operator 
$T_{ba}:\plainL2(X)\to\plainL2(\R^l)$ of the form
\begin{align*}
(T_{ba}u)(t) = b(t) \int_X T(t, x) a(x) u(x)\,dx.
\end{align*}
For estimates of the 
singular values we rely on \cite[Corollaries 4.1 and 4.5, Theorems 4.4, 4.7]{BS1977} 
which we state here in the form convenient for 
our purposes. 

\begin{prop}\label{prop:BS} 
Let $X\subset \R^d$ be a bounded Lipschitz domain. 
Assume that the kernel $T(t,x), t\in \R^l,\, x\in X,$ is such that 
$T(t, \, \cdot\, )\in \plainH{s}_2(X)$ with some $s>0$, 
for a.e. $t\in \R^l$. Assume that 
$b\in\plainL2_{\textup{\tiny loc}}(\R^l)$ and that 
$a\in\plainL{r}(X)$, where 
\begin{align*}%\label{eq:r}
\begin{cases}
r = 2, \quad \textup{if} \quad 2s >d,\\
r>2, \quad \textup{if}\quad 2s = d,\\
r > d s^{-1}, \quad \textup{if}\quad 2s <d. 
\end{cases}
\end{align*}
Then $T_{ba}\in\BS_{q, \infty}$ where $q^{-1} = 2^{-1} + s d^{-1}$ and 
\begin{align*}
\|T_{ba}\|_{q, \infty}\lesssim \biggl[\int_{\R^l}  
\|T(t,\ \cdot\ )\|_{\plainH{s}_2(X)}^2\, |b(t)|^2 \,dt \biggr]^{\frac{1}{2}}
\|a\|_{\plainL{r}(X)},
\end{align*}
under the assumption that the right-hand side is finite.
\end{prop}    

We apply this proposition to the kernel $T(t, x)$ that satisfies the 
bound \eqref{eq:sing} 
with some $\a>-d/2$ as a function of $x$. 
In this case we define $s = \a + d/2>0$, so that the conditions on the 
number $r$ in Proposition 
\ref{prop:BS} rewrite as follows  
\begin{align}\label{eq:r}
\begin{cases}
r = 2, \quad \textup{if} \quad \a >0,\\[0.2cm]
r>2, \quad \textup{if}\quad \a = 0,\\[0.2cm]
r > 2d(2\a+d)^{-1}, \quad \textup{if}\quad -d < 2\a <0. 
\end{cases}
\end{align}
Below we denote $\CC_n = [0, 1)^d + n, n\in\mathbb Z^d$. The notation $\1_n$ is used for the 
indicator function of the cube $\CC_n$.  

In the next theorem we consider $T_{ba}$ as an operator from $\plainL2(\R^d)$ to $\plainL2(\R^l)$. 

\begin{thm} \label{thm:BSspace} 
Let $d\ge 2$ and $l\ge 1$. 
Let $z_k: \R^l\mapsto \R^d$,\, 
$k = 1, 2, \dots, N$, be some functions of $t\in\R^l$, and let $\a > -d/2$. 
Consider the kernel $T(t, x)$, $x\in\R^d, t\in\R^l,$ satisfying the condition 
\begin{align}\label{eq:kernelexp}
|\nabla_x^j T(t, x)| \le A(t, x) 
\big(1+\sum_{k=1}^N \big(1\wedge |x-z_k(t)|\big)^{\a-j}\big),\quad  
j = 0, 1, \dots, 
\end{align}    
for a.e. $t\in\R^l$ and a.e. $x\in \R^d$, where 
$A\in\plainL\infty_{\textup{\tiny loc} }(\R^l\times\R^d)$ 
is a non-negative function. 
%
%Furthermore, assume that $T(t,\,\cdot\,)\in\plainW{m,1}_{\textup{loc}}(\R^d)$  
%for a.e. $x\in\R^l$. 
%
Let $b\in\plainL2_{\textup{loc}}(\R^l)$ 
and $a\in\plainL{r}_{\textup{\tiny loc}}(\R^d)$ 
where the parameter $r$ is as specified in \eqref{eq:r}.
Then the operator $T_{ba}$ 
belongs to $\BS_{q, \infty}$ where $q^{-1} = 1+\a d^{-1}$ and 
\begin{align}\label{eq:piv}
\|T_{ba}\|_{q, \infty}\lesssim 
\bigg[\sum_{n\in\Z^d}
\|A_n b\|_{\plainL2(\R^l)}^q\,\|a\|_{\plainL{r}(\CC_n)}^q
\bigg]^{\frac{1}{q}},\quad 
A_n(t) = \|A(t,\,\cdot\,)\|_{\plainL\infty(\CC_n)},
\end{align}
under the assumption that the right-hand side of the above inequality is finite. 
\end{thm}

\begin{proof} 
Consider the operator $W_n = T_{ba}\1_n$, $n\in\mathbb Z^d$,     
with kernel $W_n(t, x) = b(t)T(t, x) a(x)\1_n(x)$ 
and apply Proposition \ref{prop:BS} to each $W_n$. 
Since $d\ge 2$, by \eqref{eq:kernelexp} it follows from 
Lemma \ref{lem:nik} that 
$W_n(t,\, \cdot\, )\in \plainH{s}_2(\CC_n)$, $s = \a+d/2$,  
for a.e. $t\in\R^l$, and 
\begin{align*}
\|W_n(t, \, \cdot\, )\|_{\plainH{s}_2(\CC_n)}\lesssim A_n(t). 
\end{align*}
Thus, by Proposition \ref{prop:BS}, the operator $W_n$ belongs to 
$\BS_{q, \infty}$ with $1/q = 1/2 + s/d = 1+ \a/d$, and 
\begin{align}\label{eq:wn}
\| W_n\|_{q, \infty}\lesssim \bigg[\int_{\R^l} A_n(t)^2 |b(t)|^2 dt \bigg]^{\frac{1}{2}}
 \|a\|_{\plainL{r}(\CC_n)} = \|A_n b\|_{\plainL2(\R^m)}\|a\|_{\plainL{r}(\CC_n)},
\end{align} 
with a constant independent of $n$. 
Now observe that $W_k W_j^* = 0$ for $k\not = j$. Furthermore, 
since $\a > -d/2$ we have $q < 2$, and hence by Proposition \ref{prop:orthog},
\begin{align*}
\| T_{ba}\|_{q, \infty}^q\le 2(2-q)^{-1} \sum_{n\in\mathbb Z^n} \|W_n\|_{q, \infty}^q.
\end{align*}
Using \eqref{eq:wn} we obtain the required bound \eqref{eq:piv}.
\end{proof}
 
\subsection{Model operator}\label{subsect:model} 
When studying the spectral asymptotics of the operator $\boldsymbol\ST$, we make use of the following model operator that was examined  in \cite{Sobolev2021}. 

Let $a, b_{j,k}, \b_{j,k}$, $j = 1, 2, \dots, N$, $k = 1, 2, \dots, N-1$, be 
scalar functions such that 
\begin{align}\label{eq:abbeta} 
\begin{cases}
a\in\plainC\infty_0(\R^3), &\ \quad b_{j,k}\in \plainC\infty_0(\R^{3N-3}),\\[0.2cm] 
\b_{j,k}\in\plainC\infty(\R^{3N}), 
\end{cases}
\end{align}
for all $j = 1, 2, \dots, N$, $k = 1, 2, \dots, N-1$. 
Let $\Phi\in\plainC\infty(\R^3\setminus \{0\})$ be a vector 
function with $m$ scalar components, 
homogeneous of order $\a>-3$:
\begin{align}\label{eq:phi}
\Phi(t x) = t^\a \Phi(x),\quad x\not = 0, t >0.
\end{align} 
Consider the vector-valued kernel $\CM(\hat\bx, x)$: 
\begin{align}\label{eq:cm}
\begin{cases}
\CM(\hat\bx, x) = \{\CM_j(\hat\bx, x)\}_{j=1}^N,\ 
\quad
\CM_j(\hat\bx, x) = \sum_{k=1}^{N-1} \CM_{j,k}(\hat\bx, x),\\[0.3cm]
\CM_{j,k}(\hat\bx, x) =  b_{j,k}(\hat\bx) \Phi(x-x_k) a(x)\b_{j,k}(\hat\bx, x). 
\end{cases}
\end{align}
The eigenvalue asymptotics for the operator 
$\iop(\CM):\plainL2(\R^3)\mapsto \plainL2(\R^{3N-3}; \mathbb C^{mN})$ 
was found in \cite{Sobolev2021} for 
general functions $\Phi$ of the form \eqref{eq:phi}. Below we state this result for the 
function $\Phi(x) = \nabla |x| = x|x|^{-1}$, which is 
homogeneous of order $0$. This is 
the case needed for the study of the operator 
$\boldsymbol\ST$.  
We use the representations $(\hat\bx, x) = (\tilde\bx_{k, N}, x_k, x)$ introduced in \eqref{eq:xtilde}.  
Denote  
\begin{align}\label{eq:hm}
\begin{cases}
h(t) = \bigg[\sum_{j=1}^N\sum_{k=1}^{N-1}\int_{\R^{3N-6}} 
| b_{j,k}(\tilde\bx_{k, N}, t)\b_{j,k}(\tilde\bx_{k, N}, t, t) |^2 d\tilde\bx_k\bigg]^{\frac{1}{2}},\ 
\textup{if}\ N\ge 3;\\[0.3cm]
h(t) = \big(|b_{1,1}(t)\b_{1,1}(t, t)|^2 
+ |b_{2,1}(t)\b_{2,1}(t, t)|^2\big)^{\frac{1}{2}},\ \textup{if}\ N=2.
\end{cases}
\end{align}
The next proposition summarizes the required results from \cite{Sobolev2021}. 

\begin{prop}\label{prop:gradp} 
If $\Phi(x)=\nabla|x| = x|x|^{-1}$, then $\iop(\CM)\in \BS_{1, \infty}$ and 
 \begin{align}\label{eq:gradp} 
\SfG_1\big(\iop(\CM)\big)
= \sg_1\big(\iop(\CM)\big)
= \nu_{0, 3}\int_{\R^3} |a(x) h(x)|\, dx,
\end{align}
where $\nu_{0, 3} = 4(3\pi)^{-1}$.

If $\Phi(x)$ is homogeneous of order $\a >0$, then $\SfG_1\big(\iop(\CM)\big) = 0$.
\end{prop}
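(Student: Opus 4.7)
The plan is to apply the Birman-Solomyak asymptotic theory for integral operators with homogeneous kernels (as developed in \cite{BS1970, BS1977_1, BS1979} and summarized for the present setting in \cite{Sobolev2021}) after reducing $\iop(\CM)$ to a collection of model operators whose cross-terms turn out to be negligible in the $\SfG_1$-sense. The structural observation is that $\CM_{j,k}$ is a product of smooth, compactly supported factors $b_{j,k}(\hat\bx)$, $a(x)$, $\b_{j,k}(\hat\bx,x)$ and the singular factor $\Phi(x_k-x)$, so the asymptotic contribution should concentrate on the coalescence locus $x_k = x$, with the weights evaluated at this locus producing the density $h$ of \eqref{eq:hm}.

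For each fixed $k$ I would form the block-vector operator $\BM_k = \{\iop(\CM_{j,k})\}_{j=1}^N$ and introduce the parametrization $\hat\bx = (\tilde\bx_{k,N},x_k)$. This exhibits $\BM_k$ as a direct integral over $\tilde\bx_{k,N}$ of integral operators from $\plainL2(\R^3_x)$ into $\plainL2(\R^3_{x_k})^{mN}$ whose kernels are $\{b_{j,k}(\tilde\bx_{k,N},x_k)\Phi(x_k-x)a(x)\b_{j,k}(\tilde\bx_{k,N},x_k,x)\}_{j=1}^N$. On each fiber the Birman-Solomyak asymptotic formula for integral operators with a kernel homogeneous of order zero in $x_k - x$ identifies the $\SfG_1$-coefficient as $\nu_{0,3}$ times the $\plainL1$-norm of $|a(x)|$ multiplied by the fiberwise weight density on the diagonal $x_k = x$. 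Aggregating by integration over $\tilde\bx_{k,N}$ and summing over $k$ reproduces exactly the function $h$ of \eqref{eq:hm}, while the Birman-Solomyak universal constant for order-zero homogeneous kernels in three dimensions supplies $\nu_{0,3} = 4/(3\pi)$.

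The main obstacle is to justify that the cross-terms $\iop(\CM_{j,k})^*\iop(\CM_{j',k'})$ with $k \neq k'$ contribute $\SfG_1 = 0$, so that the operators $\BM_k^*\BM_k$ decouple asymptotically. I would argue that in such a cross-term the two singular factors $\Phi(x_k-x)$ and $\Phi(x_{k'}-y)$ involve disjoint groups of $\hat\bx$-coordinates, so that after integration against the smooth compactly supported weights $b$ and $\b$ the resulting scalar kernel in $(x,y)$ acquires sufficient Sobolev regularity for Propositions \ref{prop:BS}--\ref{prop:BS1} to yield singular value decay strictly faster than $k^{-1}$; the block-vector bound \eqref{eq:blockvec} together with Lemma \ref{lem:triangleg} and Corollary \ref{cor:zero1} then give $\SfG_1 = 0$ for the cross-terms and, via approximation by smooth model operators, the identification $\SfG_1 = \sg_1$ in \eqref{eq:gradp}. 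For the case $\a > 0$, the Birman-Solomyak asymptotic theory for kernels homogeneous of strictly positive order produces singular values decaying faster than $k^{-1}$ on each fiber (the analogous general-order statement in \cite{Sobolev2021} gives decay of order $k^{-1-\a/3}$), so by the same block-vector and fiber-integration argument $\SfG_1(\iop(\CM)) = 0$.
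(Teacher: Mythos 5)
The paper does not actually prove Proposition~\ref{prop:gradp}: as the sentence preceding it says, ``The next proposition summarizes the required results from \cite{Sobolev2021},'' so the statement is imported wholesale from the companion paper and no proof is given here. There is therefore no in-paper argument to compare your proposal against.

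As a blind sketch, your proposal gets the right high-level ingredients (Birman--Solomyak theory for kernels homogeneous in $x_k-x$, concentration of the coefficient on the diagonal $x_k=x$, negligibility of the $k\neq k'$ cross-terms, the $k^{-1-\a/3}$ rate for $\a>0$), but the central aggregation step is wrong. You propose to view $\BM_k=\{\iop(\CM_{j,k})\}_j$ as a ``direct integral over $\tilde\bx_{k,N}$'' and to obtain the coefficient by ``aggregating by integration over $\tilde\bx_{k,N}$ and summing over $k$.'' The operator $\BM_k:\plainL2(\R^3_x)\to\plainL2(\R^{3N-3}_{\hat\bx})^{mN}$ is \emph{not} a direct integral: the target is fibered over $\tilde\bx_{k,N}$ but the source $\plainL2(\R^3_x)$ is shared by every fiber, and for such operators singular-value asymptotics are not obtained by integrating the fiberwise asymptotics. (A rank-one example already shows this: the total singular value involves an $\plainL2$-norm of the fiber amplitudes, not an $\plainL1$-norm.) Your proposed $\plainL1$-in-$\tilde\bx_{k,N}$ aggregation, and $\ell^1$-in-$(j,k)$ summation, would not reproduce the target density $h$ of \eqref{eq:hm}, which has $\plainL2(\R^{3N-6})$ in $\tilde\bx_{k,N}$ and $\ell^2$ in $(j,k)$. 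Those quadratic norms are precisely the signature of the correct reduction, namely passing to the self-adjoint operator $\iop(\CM)^*\iop(\CM)$ on $\plainL2(\R^3_x)$, whose kernel after integrating out $\hat\bx$ has a homogeneous singularity at $x=y$ with diagonal weight $\sum_{j,k}\int|b_{j,k}\b_{j,k}|^2\,d\tilde\bx_{k,N}$. That reduction, combined with the Birman--Solomyak asymptotics for weakly polar/anisotropically homogeneous kernels and a separate argument that the $k\neq k'$ cross-terms lie in $\BS_{q,\infty}$ with $q<1$, is what would need to be carried out; your sketch stops short of it.
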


\section{Factorization of $\boldsymbol \ST$: 
operator $\boldsymbol\SV$}\label{sect:factor}

 \subsection{Factorization of $\boldsymbol\ST$} 
\label{subsect:fact}
Let us describe the factorization of the operator $\boldsymbol\ST$ 
which is central to the proof of Theorem 
\ref{thm:maincompl}. 
Introduce the new set of functions
\begin{align*}
\psi_j(\hat\bx, x) = 
\psi(x_1, \dots, x_{j-1}, x, x_j, \dots, x_{N-1}),\quad j = 1, 2, \dots, N,
\end{align*}
depending on the fixed pair variables $\hat\bx$ and $x$. 
Thus the definition \eqref{eq:tau} rewrites in the form:
\begin{align}\label{eq:psij}
\tau(x, y) = &\ \sum_{j=1}^N \int_{\R^{3N-3}} 
\overline{\sv_j(\hat\bx, x)} \cdot \sv_j(\hat\bx, y) d\hat\bx,
\quad \textup{where} 
\notag\\
\sv_j(\hat\bx, x) = &\ \nabla_x\psi_j(\hat\bx, x),\quad j = 1, 2, \dots, N.
\end{align}
Therefore $\boldsymbol\ST$ can be represented as a product $\boldsymbol\ST = \boldsymbol\SV^*\boldsymbol\SV$, 
where $\boldsymbol\SV:\plainL2(\R^3)\mapsto\plainL2(\R^{3N-3};\mathbb C^{3N})$ 
is the integral operator with the vector-valued kernel $\SV(\hat\bx, x)$ given by  
\begin{align}\label{eq:bpsi}
\SV(\hat\bx, x) = \{\sv_j(\hat\bx, x)\}_{j=1}^N. 
\end{align} 
Given such a factorization, by formula \eqref{eq:double} 
the asymptotic relation \eqref{eq:mainas1} translates to the formula 
\begin{align}\label{eq:ttov}
\SfG_1(\boldsymbol\SV) = \sg_1(\boldsymbol\SV) = B.
\end{align}
Thus the main Theorem \ref{thm:maincompl} can be recast as follows:

\begin{thm}\label{thm:ttov}
Under the conditions of Theorem \ref{thm:maincompl} the 
formula \eqref{eq:ttov} holds with the constant $B$ 
which is defined in \eqref{eq:coeffA}.
\end{thm}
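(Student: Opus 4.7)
The plan is to exploit the factorization $\boldsymbol\ST = \boldsymbol\SV^*\boldsymbol\SV$, which via \eqref{eq:double} converts the claim $\SfG_1(\boldsymbol\SV) = \sg_1(\boldsymbol\SV) = B$ into an asymptotic statement about the singular values of the vector-valued integral operator $\boldsymbol\SV = \iop(\SV)$ with kernel \eqref{eq:bpsi}. The kernel $\sv_j(\hat\bx, x)$ is real analytic on the complement of the pair coalescence diagonals, while near each diagonal $\SU^{(\dc)}_{N,k}$ the representation \eqref{eq:trepr} applies on $\tilde\Om_{j,k}$. I would begin by fixing a smooth partition of unity adapted to these diagonals, splitting $\boldsymbol\SV$ into a ``far'' piece supported away from all coalescence diagonals and ``near'' pieces supported inside the $\tilde\Om_{j,k}$.

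Inserting the representation \eqref{eq:trepr} in each near piece splits its kernel into three summands: $\nabla_x\tilde\xi_{j,k}$, which is real analytic; $|x_k-x|\nabla_x\tilde\eta_{j,k}$, which matches the model kernel \eqref{eq:cm} with $\Phi$ homogeneous of order $1>0$; and $(x_k-x)|x_k-x|^{-1}\tilde\eta_{j,k}$, which matches \eqref{eq:cm} with $\Phi(x) = x|x|^{-1}$, homogeneous of order zero. The far piece, together with the smooth $\nabla_x\tilde\xi_{j,k}$ summand, should contribute zero because $\plainC\infty$ kernels combined with the exponential decay \eqref{eq:exp} can be approximated by $\plainC\infty_0$ kernels with vanishing quasi-norm in every $\BS_{q,\infty}$, $q>0$; the middle summand also contributes zero by virtue of the second statement of Proposition \ref{prop:gradp}. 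Thus only the order-zero homogeneous pieces survive, and their contribution is dictated by the first statement of Proposition \ref{prop:gradp}.

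The obstruction to applying Proposition \ref{prop:gradp} directly is that the factors $\tilde\eta_{j,k}$ are only known to be real analytic on the open sets $\tilde\Om_{j,k}$, with no a priori control on the closure. I would therefore invoke Lemma \ref{lem:central}, which the paper constructs separately, to replace $\tilde\eta_{j,k}$ by functions $\b_{j,k,\nu}\in\plainC\infty_0(\tilde\Om_{j,k})$, producing an approximating operator $\boldsymbol\SV_\nu$ of exact model form \eqref{eq:cm}. Proposition \ref{prop:gradp} then gives $\SfG_1(\boldsymbol\SV_\nu) = \sg_1(\boldsymbol\SV_\nu) = \nu_{0,3}\int_{\R^3}|a(x) h_\nu(x)|\,dx$, with $\nu_{0,3} = 4(3\pi)^{-1}$. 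Choosing the approximations so that the associated function $h_\nu$ in \eqref{eq:hm} converges to $H$ in $\plainL1(\R^3)$ — legitimate by Theorem \ref{thm:etadiag} — makes the right-hand side tend to $B$ in \eqref{eq:coeffA}. Corollary \ref{cor:zero1}, combined with the bound $\SfG_1(\boldsymbol\SV - \boldsymbol\SV_\nu)\to 0$ guaranteed by Lemma \ref{lem:central}, transfers the asymptotics from $\boldsymbol\SV_\nu$ to $\boldsymbol\SV$, establishing \eqref{eq:ttov}.

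The main obstacle is the control afforded by Lemma \ref{lem:central}: the approximations must simultaneously sit inside the $\tilde\Om_{j,k}$, reproduce $H$ in the limit, and — most delicately — be compatible with the effective singular-value bounds of Theorem \ref{thm:psifull} so that the $\SfG_1$-error genuinely vanishes. This is where the exponential decay \eqref{eq:exp} and the Fournais--S\o rensen regularity estimates from \cite{FS2018} enter, and where the nuclear singularities $x_j = 0$ and the higher-order coalescence points (which are absent from the final formula) must be shown, via Propositions \ref{prop:BS} and \ref{prop:BS1}, to give negligible spectral contributions.
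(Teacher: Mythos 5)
Your high-level plan agrees with the paper's: factorize $\boldsymbol\ST = \boldsymbol\SV^*\boldsymbol\SV$, localize around each pair-coalescence diagonal, feed \eqref{eq:trepr} through the decomposition, discard the real-analytic and order-one-homogeneous contributions, and read off the order-zero term from Proposition \ref{prop:gradp} applied to the approximating operator produced by Lemma \ref{lem:central}. That is essentially what the paper does. However, there is one genuine logical defect in the passage to the limit.

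You invoke Theorem \ref{thm:etadiag} to justify that the approximating functions $h_\nu$ converge to $H$ in $\plainL1(\R^3)$, and hence that $\SfG_1(\boldsymbol\SV_\nu)\to B$. But Theorem \ref{thm:etadiag} is not an independently available input: in the paper it is proved \emph{simultaneously} with Theorem \ref{thm:ttov}, and its proof consists precisely of the argument you are trying to run here. The integrability of $H$, and indeed of the diagonal traces $\eta_{j,k}(\,\cdot\,,x,x)$ in $\plainL2$, is a \emph{by-product} of the spectral bounds, not a hypothesis one can assume. As stated, your step is circular. The fix is to reverse the logical order. Lemma \ref{lem:central} guarantees that the double limit of $\SfG_1(\iop(\boldsymbol\Upsilon[\d,R,\varepsilon]))$ exists. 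Combined with the exact identity \eqref{eq:upsas} from Proposition \ref{prop:gradp}, this forces the integrals $\tfrac{4}{3\pi}\int K_R(t)H_{\d,R}(t)\,dt$ to be bounded uniformly in $\d,R$. Choosing the cut-off $\t$ monotone on $t\ge 0$ makes the integrand increase monotonically as $\d\to 0$, $R\to\infty$; the Monotone Convergence Theorem then delivers an $\plainL1$ limit, and a second application of MCT inside \eqref{eq:hdr} identifies that limit pointwise a.e. with $H$. This simultaneously proves $H\in\plainL1$ (Theorem \ref{thm:etadiag}) and pins down the asymptotic constant $B$ in \eqref{eq:coeffA}. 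So your argument becomes correct once you replace the appeal to Theorem \ref{thm:etadiag} by this MCT argument; until then it assumes what it must prove.

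Two minor imprecisions worth noting: Lemma \ref{lem:central} does not literally replace $\tilde\eta_{j,k}$ by arbitrary $\plainC_0^\infty$ functions; it multiplies by explicit cut-offs $Q_R Y_\d K_R\,\t(|x-x_k|\varepsilon^{-1})$, and the conclusion is phrased as convergence of the functionals $\SfG_1,\sg_1$ rather than as $\SfG_1(\boldsymbol\SV-\boldsymbol\SV_\nu)\to 0$ (the latter is a step inside the proof of the lemma, not its statement). And the nuclear coalescence $k=0$ must be split off explicitly (the paper's $\phi_{j,0}$ term) and shown negligible via \eqref{eq:vfull}; you mention it only in passing, but it is part of Lemma \ref{lem:central}, so if you take that lemma as given this is covered.
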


Before proceeding with the proof, we have to rephrase the regularity conditions given 
in Sect. \ref{sect:reg} in terms of the functions $\psi_j$. We start with the bound 
\eqref{eq:FS} which takes the form 
\begin{align}\label{eq:FSj}
|\nabla_{x}^m\psi_j(\hat\bx, x)|\lesssim 
\big(1+\dc(\hat\bx, x)^{1-m}\big)e^{-\varkappa {|\bx|_\scale}}, 
\, \quad \textup{for all}\quad j = 1, 2, \dots, N,\, m = 0, 1, \dots,\, 
\end{align}
and all $(\hat\bx, x)\in\SPi_N$, where 
\begin{align}\label{eq:dist}
\dc(\hat\bx, x) = \min \{1, |x_k-x|, \ 0\le k \le N-1\}.
\end{align}
(Recall the notation $x_0 = 0$.)
In order to rephrase Proposition \ref{prop:repr} a few more definitions are required. 
Let $j\not = k$, and let $\Om_{j, k}$ be the sets and 
$\xi_{j, k}(\bx)$, $\eta_{j, k}(\bx)$ be the functions
from Proposition \ref{prop:repr}.
For all $j = 1, 2, \dots, N$ and all $k = 0, 1, \dots, N-1,$ denote
\begin{align*}
\tilde\Om_{j, k} = 
\begin{cases}
\{(\hat\bx, x)\in\R^{3N}: 
(x_1, \dots, x_{j-1}, x, x_{j}, \dots, x_{N-1})\in\Om_{j, k}\},\quad \textup{if}\ j\ge k+1,\\
\{(\hat\bx, x)\in\R^{3N}: 
(x_1, \dots, x_{j-1}, x, x_{j}, \dots, x_{N-1})\in\Om_{j, k+1}\},\quad \textup{if}\ j\le k.
\end{cases}
\end{align*} 
According to \eqref{eq:omin} we have  
\begin{align}\label{eq:tom}
\SU_{N,k}^{(\dc)}\subset \tilde\Om_{j, k} \subset \SU_{N,k},\quad k = 0, 1, \dots, N-1,
\end{align}  
for all $j = 1, 2, \dots, N$. 
Together with functions $\xi_{j,k}, \eta_{j,k}$ define 
\begin{align*}
\tilde\xi_{j, k}(\hat\bx, x) = 
\begin{cases}
\xi_{j,k}(x_1, \dots, x_{j-1}, x, x_{j}, \dots, x_{N-1}),\quad \textup{if}\ j\ge k+1,\\[0.2cm]
\xi_{j,k+1}(x_1, \dots, x_{j-1}, x, x_{j}, \dots, x_{N-1}),\quad \textup{if}\ j\le k,
\end{cases}
\end{align*}
and 
\begin{align*}%\label{eq:teta} 
\tilde\eta_{j,k}(\hat\bx, x) = 
\begin{cases}
\eta_{j,k}(x_1, \dots, x_{j-1}, x, x_{j}, \dots, x_{N-1}),\quad \textup{if}\ j\ge k+1,\\[0.2cm]
\eta_{j,k+1}(x_1, \dots, x_{j-1}, x, x_{j}, \dots, x_{N-1}),\quad \textup{if}\ j\le k.
\end{cases}
\end{align*}
By Proposition \ref{prop:repr}, 
for each $j = 1, 2, \dots, N$, and each $k = 0, 1, \dots, N-1$, we have 
\begin{align}\label{eq:psirepr}
\psi_j(\hat\bx, x) = 
\tilde\xi_{j, k}(\hat\bx, x) + |x_k-x| \,\tilde\eta_{j, k}(\hat\bx, x), 
\quad \textup{for all}\ (\hat\bx, x) \in \tilde\Om_{j, k}.
\end{align} 
Observe that the sets $\tilde\Om_{j, k}$ 
and the functions $\tilde\xi_{j,k}, \tilde\eta_{j,k}$  are 
not symmetric under the permutation $j\leftrightarrow k$.

The function \eqref{eq:H} can be easily rewritten via the new functions $\tilde\eta_{j, k}$:
\begin{align}\label{eq:tH}
H(x) =  
\begin{cases}
\big(|\tilde\eta_{1, 1}(x, x)|^2 + |\tilde\eta_{2, 1}(x, x)|^2\big)^{1/2},\quad \textup{if}\ N = 2;\\[0.3cm]
\left[\sum\limits_{j=1}^N\sum\limits_{k=1}^{N-1}\int_{\R^{3N-6}} \big| 
\tilde\eta_{j, k}(\tilde\bx_{k,N}, x, x) \big|^2 d\tilde\bx_{k, N}\right]^{\frac{1}{2}},\quad 
\textup{if}\ N\ge 3.
\end{cases}
\end{align}   
This calculation is done in \cite[Sect. 2]{Sobolev2021}.

The rest of the paper is focused on the proof of Theorem \ref{thm:ttov}. 
As explained in the Introduction, 
at the heart of the proof lies the formula \eqref{eq:repr} for the function $\psi$, 
which translates to the representation \eqref{eq:psirepr}
for the functions $\psi_j$. 
This representation allows us to reduce the problem to the model operator 
considered in Subsect. \ref{subsect:model}.  
At the first stage of this reduction we construct 
convenient approximations of the kernels $\sv_j(\hat\bx, x) = \nabla_x\psi_j(\hat\bx, x)$.

\subsection{Cut-off functions} 
First we construct appropriate cut-offs. 
Fix a $\d > 0$. Along with the sets \eqref{eq:sls} introduce 
\begin{align}\label{eq:slsd}
\SfS_{l, s}(\d) = \SfS_{s, l}(\d) = \{\bx\in\R^{3N}: |x_l-x_s|>\d\},\ 0\le l < s\le N,
\end{align} 
and for all $k = 0, 1, \dots, N-1$, define
\begin{align*}%\label{eq:ujd}  
\SU_{k}(\d) = \bigg(\bigcap_{0\le l < s\le N-1} \SfS_{l, s}(\d)\bigg)\bigcap\bigg(
\bigcap_{\substack{0\le s\le N-1\\s\not = k}} \SfS_{s, N}(\d)\bigg).
\end{align*}
Comparing with \eqref{eq:uj} we see that $\SU_k(\d)\subset \SU_{k, N}$, 
and for $\bx\in \SU_k(\d)$ all 
the coordinate pairs, except for $x_k$ and $x_N$, are separated by a distance $\d$.
Similarly to \eqref{eq:diag} define the diagonal set 
\begin{align*}
\SU^{(\rm d)}_{k}(\d) = \{\bx\in\SU_{k}(\d): x_k = x_N\}\subset\SU_{k, N}^{(\dc)}.
\end{align*}
Recall that the representation 
\eqref{eq:trepr} holds on the domain $\tilde\Om_{j, k}$ which satisfies 
\eqref{eq:tom} for all $j = 1, 2, \dots, N$, $k = 0, 1, \dots, N-1$. 
We construct a compact subset of $\tilde\Om_{j, k}$ in the following way. 
For $R>0$ let 
\begin{align*}
\SU_{k}(\d, R) = &\  \SU_k(\d)\bigcap \, B_R,\\ 
\SU^{(\rm d)}_{k}(\d, R) = &\ \{\bx\in\SU_{k}(\d, R): x_k = x_N\},
\end{align*}
where 
$B_R = \bigtimes_{j=1}^N \{x_j\in\R^3: |x_j| <R\}$. 
The  set $\SU^{(\rm d)}_{k}(\d, R)$ is bounded and its closure belongs to 
$\tilde\Om_{j,k}$ for all $\d>0, R>0$. Therefore, there exists 
an $\varepsilon_0 = \varepsilon_0(\d, R)>0$ such that the $j$-independent $\varepsilon$-neighbourhood 
\begin{align}\label{eq:omj}
\tilde\Om_{k}(\d, R, \varepsilon) := \{\bx\in\SU_{k}(\d, R): |x_k-x_N|<\varepsilon\},
\end{align}
together with its closure, belongs to $\tilde\Om_{j,k}$ 
for all $\varepsilon\in (0, \varepsilon_0)$:
\begin{align}\label{eq:omdere}
\overline{\tilde\Om_{k}(\d, R, \varepsilon)}\subset \tilde\Om_{j,k},\quad \forall 
\varepsilon\in (0, \varepsilon_0).
\end{align} 
Now we specify $\plainC\infty_0$ cutoffs supported on the domains 
$\tilde\Om_{k}(\d, R, \varepsilon)$. 
Let $\t\in\plainC\infty_0(\R)$ and $\z = 1-\t$ be as defined in \eqref{eq:sco}, \eqref{eq:sco1}. 
Denote 
\begin{align}\label{eq:ydel}
Y_\d(\hat\bx) = \prod_{0\le l < s \le N-1} \z\big(|x_l-x_s|(4\d)^{-1}\big).
\end{align}
By the definition of $\z$, 
\begin{align*}%\label{eq:omdel}
\supp Y_\d\subset 
\bigcap_{0\le l<s\le  N-1} \SfS_{l, s}(2\d),
\end{align*}
where $\SfS_{l, s}(\ \cdot\ )$ is defined in \eqref{eq:slsd}.  
As $\t \ge 0$ and $\z\le 1$, we have 
\begin{align}\label{eq:comply}
1-Y_\d(\hat\bx)\le \sum_{0\le l<s\le N-1} \t\big(|x_l-x_s|(4\d)^{-1}\big),
\end{align}
Furthermore, due to \eqref{eq:sco1}, for any $\varepsilon\le \d$ we have 
\begin{align}\label{eq:partun}
Y_\d(\hat\bx)\sum_{j=0}^{N-1} \t\big(|x-x_j|\varepsilon^{-1}\big) 
+ Y_\d(\hat\bx)\prod_{j=0}^{N-1} \z\big(|x-x_j|\varepsilon^{-1}\big) = Y_\d(\hat\bx).
\end{align} 
Define also cut-offs at infinity. Denote
\begin{align}\label{eq:qr} 
Q_R(\hat\bx) = \prod_{1\le l\le N-1} \t\big(|x_l|R^{-1}\big),\quad 
K_R(x) = \t\big(|x|R^{-1}\big).
\end{align}
As shown in \cite[Lemma 5.2]{Sobolev2021}, 
\begin{align}\label{eq:cutoff}
\textup{for all}\ \ &\ \varepsilon<\min\{\varepsilon_0, \d\}\quad 
\textup{the support of the function}\notag\\
&\ Q_R(\hat\bx) K_R(x) Y_\d(\hat\bx) \t\big(|x-x_k|\varepsilon^{-1}\big)
\ \  
\textup{belongs to}\ \ \tilde\Om_k(\d, R, \varepsilon)
\end{align} 
for all $k = 0, 1, \dots, N-1$. 
 
\subsection{} 
Using the cut-offs introduced above we construct an  
 approximation 
for the kernels $\sv_j(\hat\bx, x)$. It follows from \eqref{eq:psirepr} that 
\begin{align}\label{eq:trepr}
\sv_j(\hat\bx, x) = 
\nabla_x\tilde\xi_{j, k}(\hat\bx, x) + &\ |x_k-x|\, \nabla_x \tilde\eta_{j, k}(\hat\bx, x) \notag\\
&\ + (x-x_k)|x_k-x|^{-1} \tilde\eta_{j, k}(\hat\bx, x),
\quad \textup{for all}\ (\hat\bx, x) \in \tilde\Om_{j, k}.
\end{align} 
The last term in the above sum is the one that determines the asymptotics 
for the operator $\boldsymbol\SV$. To construct for it a suitable approximation we assume that 
the parameter $\varepsilon_0(\d, R)$ in \eqref{eq:omdere} satisfies 
$\varepsilon_0(\d, R) \le \d$, so that   
for all 
$\varepsilon < \varepsilon_0(\d, R)$, 
apart from the inclusion \eqref{eq:omdere} 
we also have \eqref{eq:cutoff}. 
Thus, for these values of $\varepsilon$ 
the real analytic functions $\tilde\xi_{j,k}, \tilde\eta_{j,k}$ 
are well-defined on the support of the function \eqref{eq:cutoff}, 
and hence the kernel 
\begin{align}\label{eq:upsilon}
\Upsilon_j[\d,R,\varepsilon](\hat\bx, x) 
=  Q_R(\hat\bx) Y_\d(\hat\bx) K_R(x)\sum_{k=1}^{N-1}\t\big(|x-x_k|\varepsilon^{-1}\big)
\frac{x-x_k}{|x-x_k|}\tilde\eta_{j, k}(\hat\bx, x),
\end{align}
is well-defined for all $(\hat\bx, x)\in\R^{3N}$, and each of the functions
\begin{align*}
Q_R(\hat\bx) Y_\d(\hat\bx) K_R(x) \t\big(|x-x_k|\varepsilon^{-1}\big)
 \tilde\eta_{j,k}(\hat\bx, x),\quad k = 1, 2, \dots, N-1,
\end{align*}
is $\plainC\infty_0(\R^{3N})$. 
Our objective is to prove that the vector-valued kernel 
\begin{align*}
\boldsymbol\Upsilon[\d, R, \varepsilon](\hat\bx, x) 
= \big\{\Upsilon_j[\d,R,\varepsilon](\hat\bx, x)\big\}_{j=1}^N
\end{align*} 
is indeed 
an approximation for $\boldsymbol\SV(\hat\bx, x)$(see \eqref{eq:bpsi}) in the following sense. 

\begin{lem}\label{lem:central} 
The following relations hold: 
\begin{align*}
\SfG_1(\boldsymbol\SV) = \lim\limits_{\substack{\d\to 0\\ R\to\infty}} \lim_{\varepsilon\to 0}
\SfG_1\big(\iop(\boldsymbol\Upsilon[\d, R, \varepsilon])\big),\quad 
\sg_1(\boldsymbol\SV) = \lim\limits_{\substack{\d\to 0\\ R\to\infty}} \lim_{\varepsilon\to 0}
\sg_1\big(\iop(\boldsymbol\Upsilon[\d, R, \varepsilon])\big), 
\end{align*}
where the limits on the right-hand side exist. 
\end{lem}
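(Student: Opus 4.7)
The plan is to decompose $\boldsymbol\SV - \iop(\boldsymbol\Upsilon[\d,R,\varepsilon])$ as a finite sum of integral operators, each of which is either already known to have $\SfG_1 = 0$ or whose $\BS_{1,\infty}$-quasi-norm can be made arbitrarily small by a suitable choice of the parameters $\d$, $R$, $\varepsilon$. The conclusion will then follow from Corollary \ref{cor:zero1} together with the triangle inequality \eqref{eq:blockvec}, applied component-wise to the block-vector operators. Throughout, I would use the partition of unity \eqref{eq:partun} with cutoffs $Y_\d$, $Q_R$ and $K_R$ as defined in \eqref{eq:ydel}, \eqref{eq:qr}, inserting $1 = Q_R K_R + (1-Q_R K_R)$ on top of \eqref{eq:partun}.

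Inside the localisation region, i.e.\ on the support of each summand $Q_R(\hat\bx)Y_\d(\hat\bx)K_R(x)\t(|x-x_k|\varepsilon^{-1})$ with $k = 0, 1, \dots, N-1$, the inclusion \eqref{eq:cutoff} together with \eqref{eq:omdere} permits me to substitute the representation \eqref{eq:trepr} for $\sv_j$. This yields three contributions per pair $(j,k)$: the smooth term $\nabla_x\tilde\xi_{j,k}$, the term $|x_k-x|\nabla_x\tilde\eta_{j,k}$ whose kernel factor $|x_k-x|$ is continuous and homogeneous of order one in $x_k-x$, and the singular term $(x-x_k)|x-x_k|^{-1}\tilde\eta_{j,k}$, which is precisely what enters $\Upsilon_j[\d,R,\varepsilon]$. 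After multiplying by the cutoffs, both of the first two contributions fall into the framework of the model operator of Subsect. \ref{subsect:model}: they are of the form \eqref{eq:cm} with a $\plainC\infty$ function $\Phi$ of strictly positive homogeneity (namely $\alpha=\infty$ in the first case, which is trivially smooth across the diagonal, and $\alpha=1$ in the second). By the second assertion of Proposition \ref{prop:gradp} they have $\SfG_1 = 0$, so by \eqref{eq:blockvec} they contribute nothing to $\SfG_1(\boldsymbol\SV)$ or $\sg_1(\boldsymbol\SV)$.

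The main obstacle is to show that the remainder terms, i.e.\ the contributions of $(1-Q_R K_R)\boldsymbol\SV$, of $Y_\d$ replaced by $1-Y_\d$, and of the second summand in \eqref{eq:partun} (the $\prod\z$-term, supported where $|x-x_j|\ge \varepsilon/2$ for every $j$), admit uniform bounds in $\BS_{1,\infty}$ that tend to $0$ as $\varepsilon\to 0$ and then as $\d\to 0$, $R\to\infty$. For this I plan to use the effective bounds for $\boldsymbol\SV$ announced in the Introduction and established in Sect. \ref{sect:prep} (Theorem \ref{thm:psifull}), which estimate $\|b\,\boldsymbol\SV\,a\|_{1,\infty}$ in terms of integral norms of the weights $a,b$ that vanish when the weights are localised to small or remote regions. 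The exponential bound \eqref{eq:exp} ensures that cutting off at $|x|,|x_l|\ge R$ contributes a vanishing amount as $R\to\infty$; the bound \eqref{eq:comply} combined with the fact that the higher-order coalescence set has positive codimension controls the $1-Y_\d$ piece; and the $\prod\z$-term, being supported away from all coalescence points, involves only the real analytic part of $\psi$ and is handled by the Birman--Solomyak estimates of Propositions \ref{prop:BS}, \ref{prop:BS1} to lie in $\BS_{p,\infty}$ for every $p>0$, so in particular its $\SfG_1$ vanishes.

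Combining these estimates, I obtain
\begin{align*}
\limsup_{\varepsilon\to 0}\SfG_1\big(\boldsymbol\SV - \iop(\boldsymbol\Upsilon[\d,R,\varepsilon])\big)^{1/2}
\le \rho(\d,R),
\end{align*}
with $\rho(\d,R)\to 0$ as $\d\to 0$, $R\to\infty$. Applying Corollary \ref{cor:zero1} twice — first in $\varepsilon$ for fixed $\d, R$, then letting $\d\to 0$ and $R\to\infty$ — yields the existence of the iterated limits and their equality with $\SfG_1(\boldsymbol\SV)$ and $\sg_1(\boldsymbol\SV)$, respectively. The technically heaviest step will be verifying, via Theorem \ref{thm:psifull}, that the effective constants really do converge to zero in the required order of limits; this is where the weighted Sobolev regularity of $\psi$ established in \cite{FS2018} enters critically.
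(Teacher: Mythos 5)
Your overall structure mirrors the paper's proof: partition via the cutoffs $Q_R$, $Y_\d$, $K_R$ and the $\varepsilon$-localizations from \eqref{eq:partun}, substitute the representation \eqref{eq:trepr} inside the localization region, kill the smooth and order-one-homogeneous pieces, and control the remainders via Theorem~\ref{thm:psifull} and Corollary~\ref{cor:zero1}. That is exactly the right strategy, and most of your outline aligns with the paper's.

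However, there is a genuine gap in the treatment of the $k=0$ term. You write that, on the support of each summand $Q_R Y_\d K_R\,\t\big(|x-x_k|\varepsilon^{-1}\big)$ with $k=0,1,\dots,N-1$, the singular contribution $(x-x_k)|x-x_k|^{-1}\tilde\eta_{j,k}$ \emph{is precisely what enters} $\Upsilon_j[\d,R,\varepsilon]$. That is false for $k=0$: the kernel $\Upsilon_j$ in \eqref{eq:upsilon} sums only over $k=1,\dots,N-1$, so the nucleus piece $-\,\t\big(|x|\varepsilon^{-1}\big)\,(x/|x|)\,\tilde\eta_{j,0}(\hat\bx,x)$ is left unaccounted for in your decomposition of $\boldsymbol\SV-\iop(\boldsymbol\Upsilon[\d,R,\varepsilon])$. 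Since this kernel is homogeneous of order zero in $x$ near $x=0$, the second assertion of Proposition~\ref{prop:gradp} does \emph{not} make it vanish; its $\SfG_1$ is controlled only because the localization $\t\big(|x|\varepsilon^{-1}\big)$ shrinks to a single point $x=0$, not to a moving coalescence manifold. You need the separate estimate $\SfG_1\big(\iop(\phi_{j,0}[\d,R,\varepsilon])\big)\to 0$ as $\varepsilon\to 0$, which the paper obtains from \eqref{eq:vfull} with the fixed spatial weight $a(x)=\t\big(|x|\varepsilon^{-1}\big)$, using $R_\varkappa(a)\lesssim\varepsilon$. Without this step your argument does not show that the nucleus does not contribute.

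A secondary, minor point: the smooth term $\nabla_x\tilde\xi_{j,k}$ is not covered by the second assertion of Proposition~\ref{prop:gradp} (there is no meaningful sense in which it is ``homogeneous of order $\alpha=\infty$''). The correct reference is Proposition~\ref{prop:BS}, which shows that a $\plainC\infty_0$ kernel is in $\BS_{q,\infty}$ for all $q>0$ and hence has $\SfG_1=0$.
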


Our next step is prove the main theorems (Theorems 
\ref{thm:etadiag}, \ref{thm:ttov} and \ref{thm:maincompl}) 
assuming  Lemma \ref{lem:central}. The proof of Lemma \ref{lem:central} 
is postponed until Sect. \ref{sect:trim}.   

\subsection{Proof of Theorems \ref{thm:etadiag} and \ref{thm:ttov}, \ref{thm:maincompl}}
\label{subsect:proofs}

First we apply Proposition \ref{prop:gradp} to the operator 
$\iop\big(\boldsymbol\Upsilon[\d, R, \varepsilon]\big)$. 

\begin{lem} 
The operator $\iop(\boldsymbol\Upsilon[\d, R, \varepsilon])$ belongs to $\BS_{1, \infty}$ 
for all $\d >0, R>0, \varepsilon<\varepsilon_0(\d, R)$ 
 and 
\begin{align}\label{eq:upsas}
\SfG_1\big(\iop(\boldsymbol\Upsilon[\d, R, \varepsilon])\big) 
= \sg_1\big(\iop(\boldsymbol\Upsilon[\d, R, \varepsilon])\big) 
= \frac{4}{3\pi} \int  K_R(t) H_{\d, R}(t) dt,
\end{align}
where
\begin{align*}
H_{\d, R}(t) = Q_R(t) Y_\d(t)\, \big(|\tilde\eta_{1, 1}(t, t)|^2 + \tilde\eta_{1, 2}(t, t)|^2\big)^{1/2},\ \textup{if}\ N=2,
\end{align*}
and
\begin{align}\label{eq:hdr}
H_{\d, R}(t) =  
\bigg[\sum_{j=1}^N\sum_{k=1}^{N-1}\int_{\R^{3N-6}} \big| 
Q_R(\tilde\bx_{k, N}, t) Y_\d(\tilde\bx_{k, N}, t)
\tilde\eta_{j, k}(\tilde\bx_{k, N}, t, t) \big|^2 d\tilde\bx_{k, N}\bigg]^{\frac{1}{2}},\ \textup{if}\ N\ge 3.
\end{align}  
\end{lem}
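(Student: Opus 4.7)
The plan is to recognize $\iop(\boldsymbol\Upsilon[\d,R,\varepsilon])$ as a special case of the model operator $\iop(\CM)$ from Proposition~\ref{prop:gradp} with the order-$0$ homogeneous factor $\Phi(y) = -y/|y| = -\nabla|y|$, and then read off \eqref{eq:upsas} from formula \eqref{eq:gradp}. To set up the matching, I would take $a(x) = K_R(x)$ and $b_{j,k}(\hat\bx) = Q_R(\hat\bx) Y_\d(\hat\bx)$ (in this case independent of $j,k$), so that each summand $\Upsilon_{j,k}$ in \eqref{eq:upsilon} can be written in the form \eqref{eq:cm}, with the residual factor $\b_{j,k}(\hat\bx, x) = \t(|x-x_k|\varepsilon^{-1})\,\tilde\eta_{j,k}(\hat\bx, x)$ (absorbing the sign from $\Phi(x_k-x) = -(x-x_k)/|x-x_k|$).

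The main obstacle is that $\tilde\eta_{j,k}$ is only defined on $\tilde\Om_{j,k}$, so $\b_{j,k}$ as just written is not globally $\plainC\infty(\R^{3N})$ as required by \eqref{eq:abbeta}. To bypass this, I would introduce an auxiliary cut-off $\chi_{j,k}\in \plainC\infty_0(\tilde\Om_{j,k})$ with $\chi_{j,k}\equiv 1$ on $\overline{\tilde\Om_k(\d,R,\varepsilon)}$; this is possible thanks to \eqref{eq:omdere}, which guarantees that $\overline{\tilde\Om_k(\d,R,\varepsilon)}$ is a compact subset of the open set $\tilde\Om_{j,k}$. Replacing $\tilde\eta_{j,k}$ by $\chi_{j,k}\tilde\eta_{j,k}$ (extended by $0$) gives a function in $\plainC\infty(\R^{3N})$, and by property \eqref{eq:cutoff} the product $b_{j,k}(\hat\bx)\,a(x)\,\t(|x-x_k|\varepsilon^{-1})\chi_{j,k}\tilde\eta_{j,k}$ is unchanged on the support of $\Upsilon_{j,k}$. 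After this modification, all hypotheses of Proposition~\ref{prop:gradp} are satisfied.

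Once the setup is in place, the conclusion is essentially a computation. The first part of Proposition~\ref{prop:gradp} gives $\iop(\boldsymbol\Upsilon)\in \BS_{1,\infty}$ together with
\begin{align*}
\SfG_1\big(\iop(\boldsymbol\Upsilon)\big)=\sg_1\big(\iop(\boldsymbol\Upsilon)\big) = \frac{4}{3\pi}\int_{\R^3} |a(t)\, h(t)|\,dt,
\end{align*}
with $h$ as in \eqref{eq:hm}. It remains to identify $|a(t) h(t)|$ with $K_R(t)H_{\d,R}(t)$. Since $K_R\ge 0$, the factor $|a(t)|$ simplifies to $K_R(t)$. To compute $h(t)$, I evaluate the $\b_{j,k}$ at the diagonal $x = x_k = t$: the cut-off $\t(0) = 1$, the extension cut-off $\chi_{j,k}$ equals $1$ on the diagonal set inside $\tilde\Om_k(\d,R,\varepsilon)$, and the remaining factor $b_{j,k}(\tilde\bx_{k,N},t)\b_{j,k}(\tilde\bx_{k,N},t,t)$ reduces to $Q_R Y_\d\tilde\eta_{j,k}(\tilde\bx_{k,N},t,t)$. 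Plugging these into \eqref{eq:hm} reproduces \eqref{eq:hdr} exactly, establishing \eqref{eq:upsas}. The $N=2$ case is handled by the same argument with the trivial second branch of \eqref{eq:hm}.
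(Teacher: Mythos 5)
Your proof is correct and follows the same route as the paper: you identify $\iop(\boldsymbol\Upsilon[\d,R,\varepsilon])$ as a model operator of the form \eqref{eq:cm} with the order-zero homogeneous factor and then read \eqref{eq:upsas} off from Proposition~\ref{prop:gradp} and \eqref{eq:hm}. The only technical difference is how the requirement $\b_{j,k}\in\plainC\infty(\R^{3N})$ of \eqref{eq:abbeta} is met: you insert an auxiliary cut-off $\chi_{j,k}\in\plainC\infty_0(\tilde\Om_{j,k})$ equal to $1$ on $\overline{\tilde\Om_k(\d,R,\varepsilon)}$, whereas the paper avoids any new function by choosing $a=K_{2R}$, $b_{j,k}=Q_{2R}Y_{\d/2}$ and folding $Q_R Y_\d K_R\,\t(|x-x_k|\varepsilon^{-1})$ into $\b_{j,k}$, so that \eqref{eq:cutoff} and \eqref{eq:omdere} already give the required compactly supported smooth extension of $\tilde\eta_{j,k}$ (the overall sign you track via $\Phi(y)=-y/|y|$ is immaterial for singular values, and indeed the paper silently uses $\Phi(y)=y/|y|$).
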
 
 
\begin{proof}
The kernel $\boldsymbol\Upsilon[\d, R, \varepsilon]$ (see \eqref{eq:upsilon}) has the form \eqref{eq:cm} 
with 
\begin{align*}
a(x) = K_{2R}(x),\ &\ b_{j,k}(\hat\bx) = Q_{2R}(\hat\bx) Y_{\d/2}(\hat\bx),\\ 
\b_{j, k}(\hat\bx, x) = &\ \t\big(|x-x_k|\varepsilon^{-1}\big) \tilde\eta_{j, k}(\hat\bx, x)
Q_{R}(\hat\bx) Y_{\d}(\hat\bx)K_{R}(x),
\end{align*}
and the homogeneous function $\Phi(x) = x|x|^{-1}$.   
Here we have used the fact that 
\begin{align*}
Q_R(\hat\bx)Q_{2R}(\hat\bx) = Q_R(\hat\bx),\quad 
Y_\d(\hat\bx) Y_{\d/2}(\hat\bx) = Y_\d(\hat\bx)
\quad \textup{and} \quad K_R(x) K_{2R}(x) = K_R(x).
\end{align*}
Thus defined functions $a, b_{j, k}$ and $\b_{j, k}$ satisfy conditions \eqref{eq:abbeta}. 
Therefore we can use Proposition \ref{prop:gradp}. It is immediate to see that in this case 
the function $h$ defined in \eqref{eq:hm}, coincides with $H_{\d, R}$, so that 
\eqref{eq:gradp} entails \eqref{eq:upsas}, as required. 
\end{proof}

\begin{proof}[Proof of Theorems \ref{thm:etadiag}, \ref{thm:ttov} and \ref{thm:maincompl}]
By Lemma \ref{lem:central},  
the r.h.s of  
the relation \eqref{eq:upsas} has a finite 
limit as $\d\to 0$ and $R\to\infty$, and  
\begin{align}\label{eq:pre}
\SfG_1(\boldsymbol\SV) = \sg_1(\boldsymbol\SV) = \lim_{\substack{\d\to 0\\ R\to\infty}} \,
\frac{4}{3\pi} \int  K_R(t) H_{\d, R}(t) dt.
\end{align} 
Therefore the integral on the right-hand side of 
\eqref{eq:pre} is bounded uniformly in $\d>0$ and $R>0$. 
Assume for convenience that the function $\t$ defined 
in \eqref{eq:sco1} is monotone decreasing on the half-axis 
$[0, \infty)$. Therefore the pointwise convergences
\begin{align*}
Y_\d(\tilde\bx_{k, N}, t)\to 1, \ \d\to 0\quad \textup{and}\quad 
K_R(t)\to 1, Q_R(\tilde\bx_{k, N} , t)\to 1,\ R\to\infty,
\end{align*} 
are monotone increasing. By the Monotone Convergence Theorem, 
the integrand $ K_R(t) H_{\d, R}(t) $ on the right-hand side of \eqref{eq:pre} 
converges for a.e. $t\in\R^3$ as $\d\to 0, R\to\infty,$ to an $\plainL1(\R)$-function,  
which we denote by $\tilde H(t)$, and the 
integral in \eqref{eq:pre} converges to 
\begin{align*}%\label{eq:limit}
\frac{4}{3\pi}\int \tilde H(t) dt.  
\end{align*}
If $N=2$, then this concludes the proof 
of Theorem 
\ref{thm:etadiag}, 
since in this case 
\begin{align*}
H_{\d, R}(t)\to\big(|\tilde\eta_{1, 1}(t, t)|^2 + \tilde\eta_{2, 1}(t, t)|^2\big)^{1/2},
\end{align*} 
a.e. $t\in\R^3$, and by virtue of \eqref{eq:tH} this limit coincides with $H(t)$. 

If $N\ge 3$, then the convergence to $\tilde H(t)$ implies that  
for a.e. $t\in \R^3$ 
the function $K_R(t)H_{\d, R}(t)$, and hence $H_{\d, R}(t)$, is bounded uniformly in $\d$ and $R$. 
Applying the Monotone Convergence Theorem to the integral \eqref{eq:hdr}, 
we conclude that 
for a.e. $t\in\R^3$ 
the a.e.-limit
\begin{align*}
|\tilde\eta_{j,k}(\tilde\bx_{k, N}, t, t)| 
= \lim_{\substack{\d\to 0\\ R\to\infty}}\big| Q_R(\tilde\bx_{k, N}, t) Y_\d(\tilde\bx_{k, N}, t)
\tilde\eta_{j,k}(\tilde\bx_{k, N}, t, t) \big|,\ 
\end{align*}
belongs to $\plainL2(\R^{3N-6})$, and 
\begin{align*}
\lim_{\substack{\d\to 0\\ R\to\infty}} \,H_{\d, R}(t) = H(t),\quad \textup{a.e.}\quad t\in \R^3,
\end{align*}
 where we have used the formula \eqref{eq:tH} for $H$. Thus 
 $H = \tilde H\in\plainL1(\R^3)$. As 
\eqref{eq:tH} is equivalent to \eqref{eq:H}, 
this completes the proof of Theorem \ref{thm:etadiag}. 
 Furthermore, since the limit on the right-hand side of 
 \eqref{eq:pre} coincides with the coefficient 
 $B$ defined in \eqref{eq:coeffA}, this also completes the proof of Theorem 
 \ref{thm:ttov}.
 As explained before,  Theorem \ref{thm:ttov} is 
 equivalent to Theorem \ref{thm:maincompl}. 
\end{proof}  

The rest of the paper focuses on the proof of Lemma \ref{lem:central}. 

\section{Proof of Lemma \ref{lem:central}}\label{sect:trim}   

The pivotal role in the proof 
is played by the singular value estimates for the weighted operator 
$\boldsymbol\SV$, that are derived in the next subsection.

\subsection{Spectral bound for the operator $\boldsymbol\SV$ }\label{subsect:prep}
Recall that the vector-valued kernel $\SV(\hat\bx, x)$ of the operator 
$\boldsymbol\SV$ is given by \eqref{eq:bpsi} with the kernels $\sv_j(\hat\bx, x)$ defined in 
\eqref{eq:psij}. 

We assume that the weights $a = a(x), x\in\R^3,$ and 
$b = b(\hat\bx)$, $\hat\bx\in \R^{3N-3},$ 
are such that $b\in\plainL2_{\textup{\tiny loc}}(\R^{3N-3})$ and  
$a\in\plainL{r}_{\textup{\tiny loc}}(\R^3)$ with some $r >2$. 
As before, denote $\CC_n = (0, 1)^3 + n$,\ $n\in\mathbb Z^3$. 
Let $\varkappa>0$ be the constant in the exponential bound \eqref{eq:exp} and \eqref{eq:FS}. 
We also suppose that $a$ and $b$ are such that both 
\begin{align*}%\label{eq:Sq}
R_\varkappa(a) =  \sum_{n\in\mathbb Z^3} e^{-  \varkappa|n|_\scale}
\|a\|_{\plainL{r}(\CC_n)} 
\end{align*}
and 
\begin{align*}%\label{eq:mb}
M_\varkappa(b) = \biggl[\int_{\R^{3N-3}} 
|b(\hat\bx)|^2 e^{-2\varkappa|\hat\bx|_\scale} 
d\hat\bx\biggr]^{\frac{1}{2}},
\end{align*}
are finite.  
Our objective is to prove the following theorem. 

\begin{thm}\label{thm:psifull}
Let $b\in\plainL2_{\textup{\tiny loc}}(\R^{3N-3})$ and 
$a\in\plainL{r}_{\textup{\tiny loc}}(\R^3)$ with some $r >2$. 
Then $b\, \boldsymbol\SV\, a \in\BS_{1, \infty}$, $j=  1, 2, \dots, N$, and  
\begin{align}\label{eq:vfullest}
\|b\, \boldsymbol\SV\,a\|_{1, \infty}\lesssim M_\varkappa(b) R_\varkappa(a).
\end{align}
\end{thm}

\begin{proof} 
By \eqref{eq:blockvec} it suffices to obtain bounds of the same form 
for the operator $b\,\iop(\sv_j)\,a$ for each fixed $j = 1, 2, \dots, N$. 
By \eqref{eq:FSj}, the kernels $\sv_j$ for all $(\hat\bx, x)\in\SPi_N$ satisfy the bound
\begin{align*}%\label{eq:FS2}
|\nabla_x^m \sv_j(\hat\bx, x)|\lesssim \big(1+\dc(\hat\bx, x)^{-m}\big)\, 
e^{-\varkappa {|\bx|_\scale}},\quad m = 0, 1, 2, \dots,
\end{align*}
where $\dc(\hat\bx, x)$ is defined in \eqref{eq:dist}.

Thus the kernel $\sv_j(\hat\bx, x)$ satisfies 
the condition \eqref{eq:kernelexp} with 
$t = \hat\bx$, 
$A(\hat\bx, x) = e^{-\varkappa ({|\hat\bx|_\scale + |x|_\scale})}$,
$d = 3$, $n = 3N-3$, $\a = 0$ and $z_k(\hat\bx) =  x_k, k = 0, 1, \dots, N-1$. Furthermore, 
since $r >2$, the weights $a$ and $b$ are as required in Theorem \ref{thm:BSspace}. Thus 
$b\, \iop(\sv_j)\, a\in \BS_{q, \infty}$ where $1/q = 1 + \a/3 = 1$. As 
\begin{align*}
\max_{x\in\CC_n} e^{-\varkappa ({|\hat\bx|_\scale + |x|_\scale})}\le e^{3\varkappa}
e^{-\varkappa |n|_\scale}e^{-\varkappa {|\hat\bx|_\scale}},
\end{align*}
the bound \eqref{eq:vfullest} follows directly from \eqref{eq:piv}. 
\end{proof}

\subsection{Proof of Lemma \ref{lem:central}}\label{subsect:trim} 
The proof of Lemma \ref{lem:central} is divided in several steps each of which is justified 
using either Corollary \ref{cor:zero} or Corollary \ref{cor:zero1}.  
  
The first stage is described in the next lemma. Recall that the cut-offs $Y_\d, Q_R$ and $K_R$  
are defined in \eqref{eq:ydel} and \eqref{eq:qr}.

\begin{lem}%\label{lem:asymp1} 
The following relations hold: 
\begin{align}\label{eq:asymp1}
\SfG_1(\boldsymbol\SV) = \lim\limits_{\substack{\d\to 0\\ R\to\infty}} 
\SfG_1(Q_RY_\d \boldsymbol\SV K_R),\quad 
\sg_1(\boldsymbol\SV) 
= \lim\limits_{\substack{\d\to 0\\ R\to\infty}} \sg_1(Q_RY_\d \boldsymbol\SV K_R),
\end{align}
where the limits on the right-hand side exist. 
\end{lem}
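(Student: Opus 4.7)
The plan is to reduce both claims to a single norm-convergence statement and then invoke Corollary \ref{cor:zero1}. Specifically, I will show
\begin{equation*}
\SfG_1\big(\boldsymbol\SV - Q_R Y_\d \boldsymbol\SV K_R\big) \longrightarrow 0
\quad \text{as } \d \to 0,\ R \to \infty,
\end{equation*}
which, by Corollary \ref{cor:zero1}, implies the existence of both limits in \eqref{eq:asymp1} as well as the stated equalities.

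To estimate the difference, I would split
\begin{equation*}
\boldsymbol\SV - Q_R Y_\d \boldsymbol\SV K_R
= \boldsymbol\SV(1 - K_R) + (1 - Q_R Y_\d)\,\boldsymbol\SV\, K_R
\end{equation*}
and use the triangle inequality \eqref{eq:trianglep} for $\SfG_1$, together with \eqref{eq:blockvec}, to reduce to bounds on the scalar operators $b\,\iop(\sv_j)\,a$. These bounds are provided by Theorem \ref{thm:psifull}, which gives
\begin{equation*}
\SfG_1\big(b\,\iop(\sv_j)\,a\big)\lesssim M_\varkappa(b)\,R_\varkappa(a),
\end{equation*}
with $\varkappa = \varkappa_3$. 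For the first piece I take $b=1$, $a = 1-K_R$; the factor $M_\varkappa(1)$ is a finite Gaussian-type integral, and $R_\varkappa(1-K_R) \to 0$ as $R\to\infty$ since $1-K_R$ is bounded by $1$ and vanishes on cubes $\CC_n$ with $|n| \lesssim R/2$, so the remaining tail is controlled by the geometric series $\sum_{|n|\gtrsim R/2} e^{-\frac12 \varkappa |n|_1}$. For the second piece I take $b = 1 - Q_R Y_\d$, $a = K_R$, and use $R_\varkappa(K_R)\le R_\varkappa(1)<\infty$.

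The remaining, and main, step is to verify that
\begin{equation*}
M_\varkappa(1-Q_R Y_\d)^2
= \int_{\R^{3N-3}} |1 - Q_R(\hat\bx) Y_\d(\hat\bx)|^2 e^{-2\varkappa |\hat\bx|_1}\, d\hat\bx
\longrightarrow 0
\end{equation*}
as $\d\to 0$ and $R\to\infty$. This is where care is needed: I would argue that $Q_R(\hat\bx)\to 1$ pointwise in $\hat\bx$ as $R\to\infty$ by the definition \eqref{eq:qr}, and that $Y_\d(\hat\bx)\to 1$ as $\d\to 0$ for a.e. $\hat\bx$, since the set of coalescence configurations $\bigcup_{l<s}\{x_l = x_s\}$ in $\R^{3N-3}$ has Lebesgue measure zero. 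The integrand is uniformly dominated by the integrable weight $e^{-2\varkappa|\hat\bx|_1}$, so the Dominated Convergence Theorem yields the required convergence.

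The main obstacle I anticipate is verifying this last convergence cleanly in the joint limit $(\d, R)\to (0,\infty)$ rather than iteratively; one way to handle this is to pass to an arbitrary sequence $(\d_n, R_n)$ with $\d_n\to 0$, $R_n\to\infty$ and apply Dominated Convergence along the sequence, since Corollary \ref{cor:zero1} only requires parameterization of $T_\nu$ by a single parameter. Once the joint convergence is established, combining all the pieces and invoking \eqref{eq:trianglep} produces $\SfG_1(\boldsymbol\SV - Q_R Y_\d \boldsymbol\SV K_R)\to 0$, finishing the proof.
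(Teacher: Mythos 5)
Your proof is correct and follows essentially the same route as the paper: reduce to $\SfG_1\big(\boldsymbol\SV - Q_R Y_\d \boldsymbol\SV K_R\big)\to 0$, estimate the pieces using Theorem \ref{thm:psifull} via \eqref{eq:blockvec}, and invoke Corollary \ref{cor:zero1}. The one genuine variation is worth noting: the paper splits the error into \emph{four} terms,
\begin{align*}
\boldsymbol\SV = Q_R Y_\d\boldsymbol\SV K_R + (I-Q_R)\boldsymbol\SV + Q_R(1-Y_\d)\boldsymbol\SV + Q_R Y_\d \boldsymbol\SV (I-K_R),
\end{align*}
so that each error term carries a single cut-off, and then computes explicit decay rates, e.g.\ $M_\varkappa(1-Y_\d)\lesssim\d^{3/2}$ and $M_\varkappa(1-Q_R)\lesssim e^{-\varkappa R/2}$, by bounding $1-Y_\d$ via \eqref{eq:comply} and $1-Q_R\le\sum_l\z(|x_l|R^{-1})$. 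You instead use a two-term split and dispose of the combined weight $1-Q_RY_\d$ via the Dominated Convergence Theorem, which is slightly cleaner but non-quantitative. Since the lemma asserts only convergence and the explicit rates are not needed elsewhere, both arguments are equally adequate; your handling of the joint limit by passing to arbitrary sequences $(\d_n, R_n)$ is the right way to make the DCT argument rigorous.
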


\begin{proof} First we check that 
\begin{align}\label{eq:psidelR}
\begin{cases}
\lim\limits_{\d\to 0} 
\|(I- Y_\d)\boldsymbol\SV\|_{1, \infty} = 0,\\[0.3cm] 
\lim\limits_{R\to \infty} \|(I- Q_R)\boldsymbol\SV\|_{1, \infty} = 0,\
\lim\limits_{R\to \infty} \|\boldsymbol\SV(I- K_R)\|_{1, \infty} = 0. 
\end{cases}
\end{align}
Consider $(I-Y_\d)\boldsymbol\SV$. 
By virtue of \eqref{eq:comply}, it follows 
from  \eqref{eq:vfullest} that  
\begin{align*}
\|((1- Y_\d)\boldsymbol\SV\|_{1, \infty}
\lesssim &\ 
M_\varkappa(1-Y_\d) R_\varkappa(1)\\
\lesssim &\ \sum_{0\le l < s\le N-1} 
\bigg[
\int \t\big(|x_l-x_s|(4\d)^{-1}\big)^2 e^{-2\varkappa |\hat\bx|_{\scalel{1}} } d\hat\bx
\bigg]^{1/2}\lesssim \d^{3/2}\to 0,\ \d\to 0,
\end{align*}
and hence the first relation in \eqref{eq:psidelR} holds. 

In a similar way one estimates $(I-Q_R)\boldsymbol\SV$ and 
$\boldsymbol\SV (I-K_R)$. 
Estimate, for example, the second of these operators. 
Since $\z = 1- \t$ (see \eqref{eq:sco}), it follows from \eqref{eq:vfullest} that 
with arbitrary $r >2$,
\begin{align*}
\|\boldsymbol\SV(1-K_R)\|_{1, \infty} 
\lesssim M_\varkappa(1) R_\varkappa(1-K_R) \lesssim &\ \sum_{n\in\mathbb Z^3} e^{-\varkappa|n|_{\scalel{1}}}
\biggl(\int_{\CC_n} |\z(|x| R^{-1})|^r\, dx\biggr)^{\frac{1}{r}}  \\[0.2cm]
\lesssim &\ \sum_{\substack{n\in\mathbb Z^3\\
|n|\ge R/3}} e^{-\varkappa|n|_{\scalel{1}}}
%
%|\z(3 |n| R^{-1})|
%
\to 0, \quad R\to\infty,
\end{align*}
whence the third equality in \eqref{eq:psidelR}. 

Now, represent $\Psi$ in the form 
\begin{align*}
\boldsymbol\SV = Q_R Y_\d\boldsymbol\SV K_R + (I-Q_R)\boldsymbol\SV + Q_R(1-Y_\d)\boldsymbol\SV + 
Q_R Y_\d \boldsymbol\SV (I-K_R).
\end{align*}
According to \eqref{eq:triangle},
\begin{align*}
\|\boldsymbol\SV - Q_R Y_d\boldsymbol\SV K_R\|_{1, \infty}^{\frac{\scalet{1}}{\scalet{2}}} 
\le &\ \|(I-Q_R)\boldsymbol\SV\|_{1, \infty}^{\frac{\scalet{1}}{\scalet{2}}} 
+ \|Q_R(I-Y_\d)\boldsymbol\SV\|_{1, \infty}^{\frac{\scalet{1}}{\scalet{2}}} + 
\|Q_R Y_\d \boldsymbol\SV(I-K_R)\|_{1, \infty}^{\frac{\scalet{1}}{\scalet{2}}}\\[0.2cm]
\le &\ \|(I-Q_R)\boldsymbol\SV\|_{1, \infty}^{\frac{\scalet{1}}{\scalet{2}}}
+ \|(I-Y_\d)\boldsymbol\SV\|_{1, \infty}^{\frac{\scalet{1}}{\scalet{2}}} + 
\|\boldsymbol\SV(I-K_R)\|_{1, \infty}^{\frac{\scalet{1}}{\scalet{2}}}.
\end{align*}
By virtue of \eqref{eq:psidelR} the right-hand side tends to 
zero as $\d\to 0, R\to\infty$.
By \eqref{eq:gnorm} 
Corollary \ref{cor:zero1} ensures \eqref{eq:asymp1}. 
\end{proof}

As the next step towards the kernel \eqref{eq:upsilon} we partition 
the kernel  
\begin{align}\label{eq:trim}
Q_R(\hat\bx) Y_\d(\hat\bx) \SV(\hat\bx, x) K_R(x)
\end{align}
of the operator $Q_R Y_\d\boldsymbol\SV K_R$  
using the cut-offs $\t\big(|x-x_k|\varepsilon^{-1}\big)$, 
$k = 0, 1, \dots, N-1$, under the assumption that $\varepsilon<\d$. 
In view of \eqref{eq:partun} 
the $j$'th component of the kernel \eqref{eq:trim} can be represented as follows:
\begin{align}\label{eq:split}
Q_R(\hat\bx) Y_\d(\hat\bx) \sv_j(\hat\bx, x) K_R(x)
= \sum_{k=0}^{N-1}\phi_{j,k}[\d, R, \varepsilon](\hat\bx, x) 
+ \rho_j[\d, R, \varepsilon](\hat\bx, x)
\end{align}
with 
\begin{align*}
\phi_{j,k}[\d, R, \varepsilon](\hat\bx, x)
= &\ Q_R(\hat\bx) Y_\d(\hat\bx)\t\big(|x-~x_k|\varepsilon^{-1}\big) \sv_j(\hat\bx, x) K_R(x),\quad 
k = 0, 1, \dots, N-1,\\[0.2cm]
\rho_j[\d, R, \varepsilon](\hat\bx, x)
= & \  Q_R(\hat\bx) 
Y_\d(\hat\bx) \prod_{k=0}^{N-1} \z\big(|x-x_k|\varepsilon^{-1}\big)  \sv_j(\hat\bx, x) K_R(x).
\end{align*} 
First we show that the kernels $\rho_j[\d, R, \varepsilon]$ and $\phi_{j,0}[\d, R, \varepsilon]$ 
give negligible contributions to the asymptotics. 

\begin{lem} For each $\d>0, R>0$ and $\varepsilon<\d$ one has 
\begin{align}\label{eq:rho}
\SfG_1\big(\iop\big(\rho_j[\d, R, \varepsilon]\big)\big) = 0,\ j = 1, 2, \dots, N.
\end{align}
\end{lem}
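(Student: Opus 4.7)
The plan is to show that $\iop(\rho_j[\d, R, \varepsilon])$ lies in $\BS_{q, \infty}$ for some $q < 1$. Combined with the observation in Subsect.~\ref{subsect:compact} that $\SfG_r(T) = 0$ whenever $T \in \BS_{p, \infty}$ with $p < r$, this will immediately yield \eqref{eq:rho}.

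First I would read off the information encoded in $\supp \rho_j[\d, R, \varepsilon]$. On this support each cut-off factor is non-vanishing, so $|x| < R$, $|x_l| < R$ for $1 \le l \le N-1$, $|x_l - x_s| > 2\d$ for $0 \le l < s \le N-1$, and $|x - x_k| > \varepsilon/2$ for every $k = 0, 1, \dots, N-1$. In particular $\dc(\hat\bx, x) > \varepsilon/2$ uniformly on the support, so the Fournais--S\o rensen bound \eqref{eq:FS2} forces every $x$-derivative of $\sv_j$ to be uniformly controlled there, up to the exponential factor $e^{-\varkappa_{|m|_1+1}|\hat\bx|_\scale}$.

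Next I would apply the Leibniz rule to absorb the $x$-derivatives of $K_R$ and of $\z(|x - x_k|\varepsilon^{-1})$; both are smooth with derivative bounds depending only on $\d, R, \varepsilon$. The result is that, for every integer $l \ge 1$,
\begin{align*}
\int_{\R^{3N-3}} \|\rho_j[\d, R, \varepsilon](\hat\bx, \cdot)\|_{\plainH{l}(\R^3)}^2 \, d\hat\bx < \infty,
\end{align*}
since the $\hat\bx$-integration effectively runs over the compact set $B_R^{N-1}$ on which the exponential weight is uniformly bounded, and the $x$-support lies in $B_R$ and therefore meets only finitely many unit cubes $\CC_n$.

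To conclude I would invoke Proposition~\ref{prop:BS} on each such cube with $d = 3$, $l = 2$, $r = 2$, and $a \equiv 1$, $b \equiv 1$, obtaining $\iop(\rho_j[\d, R, \varepsilon]\1_{\CC_n}) \in \BS_{6/7, \infty}$; summing the finitely many contributions via \eqref{eq:triangle} gives $\iop(\rho_j[\d, R, \varepsilon]) \in \BS_{6/7, \infty}$, and $6/7 < 1$ delivers \eqref{eq:rho}. There is really no substantive obstacle here: the whole point of inserting the factor $\prod_k \z(|x-x_k|\varepsilon^{-1})$ is to excise every coalescence point from the $x$-support and thereby render $\sv_j$ smooth on the relevant set, after which one can feed as many derivatives as one likes into the Birman--Solomyak machinery. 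Any mild care needed is in tracking the $\varepsilon$-dependence of the derivative bounds of the cut-offs, which is harmless because the bound we are after is purely qualitative.
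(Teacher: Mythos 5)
Your proof is correct and follows essentially the same route as the paper's: you identify that the support of $\rho_j[\d, R, \varepsilon]$ is a compact set separated from all coalescence points, conclude that the kernel is $\plainC\infty_0$ (you via the explicit Fournais--S\o rensen bounds \eqref{eq:FS2}, the paper via real-analyticity of $\sv_j$ on this set — both valid), and then feed this into Proposition~\ref{prop:BS}. The only cosmetic difference is that you fix $l=2$ to get $\BS_{6/7,\infty}$ whereas the paper invokes arbitrary $l$ to conclude $\SfG_p = 0$ for every $p>0$; either suffices for the claim.
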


\begin{proof} 
By the definitions \eqref{eq:ydel} and \eqref{eq:sco1}, 
the support of the kernel 
$\rho_j[\d, R, \varepsilon]$ belongs to the bounded domain 
\begin{align*}
 {\bigcap_{0\le l < s \le N}  \SfS_{l, s}(\varepsilon/2)\cap\, B_R}. 
\end{align*}
The function $\sv_j$ is real-analytic on this domain and it is uniformly bounded 
together with all its derivatives, so that $\rho_j[\d, R, \varepsilon]\in \plainC\infty_0(\R^{3N})$.
By Proposition \ref{prop:BS}, $\iop(\rho_j[\d, \R, \varepsilon])\in\BS_{p, \infty}$ 
for all $p >0$, whence \eqref{eq:rho}, as claimed. 
\end{proof}
 
\begin{lem} For each $\d>0, R>0$ one has 
\begin{align}\label{eq:phi0}
\lim_{\varepsilon\to 0}\SfG_1\big(\iop\big(\phi_{j,0}[\d, R, \varepsilon]\big)\big) = 0,\ 
j = 1, 2, \dots, N.
\end{align}
\end{lem}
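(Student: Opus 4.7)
The plan is to apply the effective spectral bound \eqref{eq:vfull} of Theorem \ref{thm:psifull} directly, with weights chosen so that the dependence on $\varepsilon$ is isolated in a factor that vanishes as $\varepsilon\to 0$. Write
\begin{align*}
\iop\big(\phi_{j,0}[\d, R, \varepsilon]\big) = b\,\iop(\sv_j)\,a_\varepsilon,
\end{align*}
where
\begin{align*}
b(\hat\bx) = Q_R(\hat\bx) Y_\d(\hat\bx),\qquad a_\varepsilon(x) = \t\big(|x|\varepsilon^{-1}\big) K_R(x),
\end{align*}
recalling that $x_0=0$ so that $\t(|x-x_0|\varepsilon^{-1}) = \t(|x|\varepsilon^{-1})$. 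Both weights satisfy the hypotheses of Theorem \ref{thm:psifull}: $b\in\plainL\infty(\R^{3N-3})$ with compact support, and $a_\varepsilon\in\plainL3(\R^3)$ with compact support.

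By \eqref{eq:vfull} there is a $\varkappa>0$ such that
\begin{align*}
\SfG_1\big(\iop(\phi_{j,0}[\d, R, \varepsilon])\big) \lesssim M_\varkappa(b)\, R_\varkappa(a_\varepsilon).
\end{align*}
Since $b$ does not depend on $\varepsilon$, the factor $M_\varkappa(b)$ is a finite constant depending only on $\d$ and $R$. It therefore suffices to show $R_\varkappa(a_\varepsilon)\to 0$ as $\varepsilon\to 0$.

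By \eqref{eq:sco1}, $a_\varepsilon$ is supported in $\{|x|\le\varepsilon\}$ and $|a_\varepsilon|\le 1$. For sufficiently small $\varepsilon<1$, only the cubes $\CC_n$, $n\in\mathbb Z^3$, with $|n|_1$ uniformly bounded (i.e.\ the finitely many cubes that meet the ball $\{|x|\le\varepsilon\}$) contribute to the sum in \eqref{eq:Sq}. For each such $n$,
\begin{align*}
\|a_\varepsilon\|_{\plainL3(\CC_n)}^3 \le \meas\big(\{x:|x|\le\varepsilon\}\big)\lesssim \varepsilon^{3},
\end{align*}
hence $\|a_\varepsilon\|_{\plainL3(\CC_n)}^{1/2}\lesssim \varepsilon^{1/2}$. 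Substituting into \eqref{eq:Sq} yields $R_\varkappa(a_\varepsilon)\lesssim \varepsilon$, so
\begin{align*}
\SfG_1\big(\iop(\phi_{j,0}[\d, R, \varepsilon])\big) \lesssim M_\varkappa(b)\, \varepsilon \longrightarrow 0,\qquad \varepsilon\to 0,
\end{align*}
which is \eqref{eq:phi0}.

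The argument is essentially routine given Theorem \ref{thm:psifull}; the only subtle point is recognizing that the nuclear coalescence term ($k=0$) can be absorbed into the $\plainL3$-weight $a_\varepsilon$ whose norm shrinks with $\varepsilon$, reflecting the fact anticipated in the remark after Theorem \ref{thm:maincompl}: the coalescence points $x_j=0$ do not contribute to the spectral asymptotics.
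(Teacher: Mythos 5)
Your proof is correct and follows essentially the same route as the paper: both reduce to the effective bound \eqref{eq:vfull} of Theorem \ref{thm:psifull} and observe that the weight $a_\varepsilon$ (or, in the paper's version, $a(x)=\t(|x|\varepsilon^{-1})$ after trivially bounding $Q_RY_\d$ and $K_R$ by $1$) has $R_\varkappa(a_\varepsilon)\lesssim\varepsilon\to 0$. The only cosmetic difference is that the paper discards the bounded factors $Q_RY_\d$ and $K_R$ before invoking the bound, whereas you keep them as $b$ and fold $K_R$ into $a_\varepsilon$; the estimate is identical.
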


\begin{proof}
As $x_0 = 0$ by definition, the kernel $\phi_{j,0}[\d, R, \varepsilon]$ has the form 
\begin{align*}
\phi_{j,0}[\d, R, \varepsilon](\hat\bx, x)
= Q_R(\hat\bx) Y_\d(\hat\bx)\sv_j(\hat\bx, x) \t\big(|x|\varepsilon^{-1}\big)  K_R(x). 
\end{align*}
Estimating $Q_R Y_\d\le 1$, $ K_R\le 1$, one sees that 
the singular values of $\iop(\phi_{j,0}[\d, R, \varepsilon])$ do not exceed those of the operator 
$\iop(\sv_j) a$ with the weight $a(x) = \t(|x|\varepsilon^{-1})$. 
By \eqref{eq:vfullest}, for arbitrary $r >2$ we have 
\begin{align*} 
\|\iop(\sv_j)a)\|_{1, \infty}
\lesssim M_\varkappa(1) R_\varkappa(a)
\lesssim \bigg(\int_{\R^3} \t\big(|x|\varepsilon^{-1} \big)^r dx\bigg)^{\frac{1}{r}}
\lesssim \varepsilon^{\frac{3}{r}} \to 0,\ \varepsilon\to 0.
\end{align*}
This implies \eqref{eq:phi0}.
\end{proof}

Let us now analyze the remaining components of \eqref{eq:split}.

\begin{cor}
Denote by $\boldsymbol\a[\d, R, \varepsilon]  = \{\a_j[\d, R, \varepsilon]\}_{j=1}^N$ 
the vector-valued kernel with the components  
\begin{align*}
\a_j[\d, R, \varepsilon](\hat\bx, x) = \sum_{k=1}^{N-1}\phi_{j,k}[\d, R, \varepsilon](\hat\bx, x).
\end{align*}
Then for all $\d >0$ and $R>0$, we have  
\begin{align}\label{eq:asymp2}
\begin{cases}
\SfG_1(Q_RY_\d \boldsymbol\SV K_R) = &\ 
\lim\limits_{\varepsilon\to 0} 
\SfG_1(\iop(\boldsymbol\a[\d, R, \varepsilon])),
\\[0.2cm]
\sg_1(Q_RY_\d \boldsymbol\SV K_R) = &\ 
\lim\limits_{\varepsilon\to 0} 
\sg_1(\iop(\boldsymbol\a[\d, R, \varepsilon])),
\end{cases}
\end{align}
where the limits on the right-hand side exist.
\end{cor}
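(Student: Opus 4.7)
The plan is to apply Corollary~\ref{cor:zero1} directly, with $T = Q_R Y_\d \boldsymbol\SV K_R$ and $T_\varepsilon = \iop(\boldsymbol\a[\d, R, \varepsilon])$. By the pointwise decomposition \eqref{eq:split} the difference can be written as
\begin{align*}
T - T_\varepsilon = \iop(\boldsymbol\rho[\d, R, \varepsilon]) + \iop(\boldsymbol\phi_0[\d, R, \varepsilon]),
\end{align*}
where $\boldsymbol\rho[\d, R, \varepsilon] = \{\rho_j[\d, R, \varepsilon]\}_{j=1}^N$ and $\boldsymbol\phi_0[\d, R, \varepsilon] = \{\phi_{j, 0}[\d, R, \varepsilon]\}_{j=1}^N$ are the vector-valued kernels assembled from the components handled in \eqref{eq:rho} and \eqref{eq:phi0}. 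Both $T$ and $T_\varepsilon$ lie in $\BS_{1, \infty}$: for $T$ this follows from Theorem~\ref{thm:psifull} since $|Q_R Y_\d|\le 1$ and $|K_R|\le 1$; for $T_\varepsilon$ it follows from the same theorem because $T_\varepsilon = T - (T - T_\varepsilon)$ and we shall see that $T-T_\varepsilon$ is in $\BS_{1,\infty}$. Thus the membership hypothesis of Corollary~\ref{cor:zero1} is in place.

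The core of the argument is to verify that $\SfG_1(T - T_\varepsilon)\to 0$ as $\varepsilon\to 0$. By the quasi-triangle inequality \eqref{eq:trianglep} with $p = 1$ it suffices to control the two vector-valued pieces separately. For each of them I would invoke the block-vector bound \eqref{eq:blockvec} with $p = 1$:
\begin{align*}
\SfG_1\bigl(\iop(\boldsymbol\rho[\d, R, \varepsilon])\bigr)^{\frac{2}{3}} \le \sum_{j=1}^N \SfG_1\bigl(\iop(\rho_j[\d, R, \varepsilon])\bigr)^{\frac{2}{3}} = 0,
\end{align*}
by \eqref{eq:rho}, and analogously
\begin{align*}
\SfG_1\bigl(\iop(\boldsymbol\phi_0[\d, R, \varepsilon])\bigr)^{\frac{2}{3}} \le \sum_{j=1}^N \SfG_1\bigl(\iop(\phi_{j,0}[\d, R, \varepsilon])\bigr)^{\frac{2}{3}} \longrightarrow 0, \quad \varepsilon\to 0,
\end{align*}
by \eqref{eq:phi0}. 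Combining the two bounds through \eqref{eq:trianglep} yields $\SfG_1(T - T_\varepsilon)\to 0$ as $\varepsilon\to 0$, and incidentally confirms that $T_\varepsilon\in\BS_{1,\infty}$.

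With the hypothesis of Corollary~\ref{cor:zero1} verified, the two limits $\lim_{\varepsilon\to 0}\SfG_1(T_\varepsilon)$ and $\lim_{\varepsilon\to 0}\sg_1(T_\varepsilon)$ exist and equal $\SfG_1(T)$ and $\sg_1(T)$ respectively, which is exactly \eqref{eq:asymp2}. I do not expect any real obstacle in this step: the substantive analytic content is already encapsulated in \eqref{eq:rho} and \eqref{eq:phi0}, while the passage from scalar component bounds to bounds for the vector kernel $\boldsymbol\SV$ of \eqref{eq:bpsi} is purely formal via \eqref{eq:blockvec} applied to the natural splitting into $N$ vector blocks of $\mathbb C^{3N}$. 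The main care, if any, is bookkeeping, ensuring that the indexing $k = 1, \dots, N-1$ in the definition of $\boldsymbol\a$ correctly leaves out the $k = 0$ term that is absorbed into $\boldsymbol\phi_0$.
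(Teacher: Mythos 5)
Your proof is correct and matches the paper's argument: both reduce the claim to showing $\SfG_1$ of the discarded pieces $\rho_j$ and $\phi_{j,0}$ vanishes (in the $\varepsilon\to 0$ limit) using \eqref{eq:rho} and \eqref{eq:phi0}, pass from scalar components to the vector-valued operator via \eqref{eq:blockvec}, and conclude with Corollary~\ref{cor:zero1}. The only cosmetic difference is that you apply \eqref{eq:trianglep} to split $\boldsymbol\rho$ and $\boldsymbol\phi_0$ as two vector operators, whereas the paper groups $\phi_{j,0}+\rho_j$ per component and invokes \eqref{eq:triangleg}; both routes are equivalent.
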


\begin{proof}
By \eqref{eq:split}, the kernel $Q_R Y_\d \sv_j K_R$ has the form 
\begin{align*}
\a_j[\d, R, \varepsilon] + \phi_{j, 0}[\d, R, \varepsilon] + \rho_j[\d, R, \varepsilon].
\end{align*}
By virtue of \eqref{eq:trianglep} and \eqref{eq:rho}, \eqref{eq:phi0}, 
for each $j = 1, 2, \dots, N$ 
we have 
\begin{align*}
\lim_{\varepsilon\to 0}
\SfG_1\big(\iop\big(\phi_{j, 0}[\d, R, \varepsilon] + \rho_j[\d, R, \varepsilon]\big)\big) = 0.
\end{align*}
Now \eqref{eq:asymp2} follows from Corollary \ref{cor:zero1}.
\end{proof}

\begin{proof}[Completion of the proof of Lemma \ref{lem:central}]
According to 
\eqref{eq:cutoff}, under the condition $\varepsilon < \varepsilon_0(\d, R)$, the 
support of each kernel
\begin{align*}
\phi_{j, k}[\d, R, \varepsilon], \quad j= 1, 2, \dots, N,\quad k = 1, 2, \dots, N-1,
\end{align*}  
belongs to $\tilde\Om_k(\d, R, \varepsilon)$, see \eqref{eq:omj} 
for the definition. Therefore one can use the representation \eqref{eq:trepr} for the function 
$\sv_j$: 
\begin{align*}
\a_j[\d, R, \varepsilon](\hat\bx, x) 
= \sum_{k=1}^{N-1}&\ \phi_{j,k}[\d, R, \varepsilon](\hat\bx, x)
= \sum_{k=1}^{N-1} Q_R(\hat\bx) Y_\d(\hat\bx)\t\big(|x-x_k|\varepsilon^{-1}\big) 
\nabla_x\tilde\xi_{j,k}(\hat\bx, x) K_R(x)\\[0.2cm] 
&\ \quad + \sum_{k=1}^{N-1} Q_R(\hat\bx) Y_\d(\hat\bx)\t\big(|x-x_k|\varepsilon^{-1}\big) 
|x_k-x|\nabla_x\tilde\eta_{j,k}(\hat\bx, x) K_R(x)\\[0.2cm]
&\ \quad + \sum_{k=1}^{N-1} Q_R(\hat\bx) Y_\d(\hat\bx)\t\big(|x-x_k|\varepsilon^{-1}\big) 
\frac{x-x_k}{|x-x_k|}\tilde\eta_{j,k}(\hat\bx, x) K_R(x).
\end{align*}
Each term in the first sum on the right-hand side belongs to $\plainC\infty_0(\R^{3N})$.
 Thus, by Proposition \ref{prop:BS}, the functional 
$\SfG_p$ for the associated operator equals zero for all $p >0$, and in particular, for $p=1$.
Each term in the second sum contains the function $\Phi(x) = |x|$, which is homogeneous of order $1$. 
Thus, by Proposition \ref{prop:gradp} the functional $\SfG_1$ for the associated operator equals zero. 
The third sum coincides with the kernel $\Upsilon_j[\d, R, \varepsilon](\hat\bx, x)$, 
defined in \eqref{eq:upsilon}. Therefore, by Corollary \ref{cor:zero}, 
\begin{align}\label{eq:asymp3} 
\begin{cases}
\SfG_1(\iop(\boldsymbol\a[\d, R, \varepsilon])) = 
\SfG_1(\iop(\boldsymbol\Upsilon[\d, R, \varepsilon])),\\[0.2cm]
\sg_1(\iop(\boldsymbol\a[\d, R, \varepsilon])) = 
\sg_1(\iop(\boldsymbol\Upsilon[\d, R, \varepsilon])),
\end{cases}
\end{align}
for each $\d>0, R>0$ and $\varepsilon<\varepsilon_0(\d, R)$. 
Putting together \eqref{eq:asymp1}, \eqref{eq:asymp2} and \eqref{eq:asymp3}  
and using Corollary \ref{cor:zero1}, 
we conclude the proof of Lemma \ref{lem:central}.
\end{proof}

Recall that the main Theorem \ref{thm:maincompl} was derived from Lemma \ref{lem:central} 
in Subsect. \ref{subsect:proofs}. Thus all proofs are complete.

\section{Appendix}\label{sect:app}

In this appendix we prove an elementary extension result for Sobolev spaces. 
We consider spaces of functions that depend either 
on one variable $x\in\R^d$ or on two variables $(t, x)\in \R^l\times\R^d$. 
Denote $K = \{t\in\R^l: |t| < 1\}$,\ $B = \{x\in\R^d: |x| < 1\}$ and 
$B_0 = B\setminus\{0\}$. In Section \ref{sect:intop} (see Remark \ref{rem:remove}) 
we need only the case of one variable $x\in \R^d$. The following lemma however holds for 
the more general case of two variables $(t, x)\in \R^l\times \R^d$ as well, 
which may come in handy in different 
circumstances.

Our objective is to establish the following elementary extension result. 

\begin{lem} \label{lem:remove}
Let the dimension $l$ be arbitrary, let 
$d\ge 2$, $m\ge 1$, and 
%
%$p\ge d(d-1)^{-1}$
%
$p\in [d(d-1)^{-1}, \infty]$. Then 
$\plainW{m, p}(B_0) 
= \plainW{m, p}(B)$ and $\plainW{m, p}\big(K\times B_0\big) 
= \plainW{m, p}\big(K\times B\big)$.
\end{lem}  
  
\begin{proof} 
We prove only the second equality since the first one is obtained in the same way. 
It suffices to show that $\plainW{1, p}\big(K\times B_0\big) 
\subset \plainW{1, p}\big(K\times B\big)$. 
Let $u\in \plainW{1, p}\big(K\times B_0\big)$. 
We show that for all $j\in \mathbb N_0^{d+l}$, $|j|=1$,  and all test 
functions $\phi\in \plainC\infty_0(K\times B)$ the following identity holds:
\begin{align}\label{eq:parts}
\int u\, \p^j\phi\, dt dx = -\int \p^j u \, \phi \, dt dx,  
\end{align}
where by $\p^j u $ we denote the distributional  
derivatives of $u$ on the set $K\times B_0$. 
Let $\t$ be the function as defined in \eqref{eq:sco} and \eqref{eq:sco1}, and introduce 
$\t_\varepsilon(x) = \xi(|x|\varepsilon^{-1})$. Then 
\begin{align*}
\int u\, \p^j\phi\, dt dx = \int u\, \p^j\big(\phi\,(1-\t_\varepsilon)\big)\, dt dx 
+ \int u\, \p^j\big(\phi\,\t_\varepsilon\big)\, dt dx. 
\end{align*}
Since $1-\t_\varepsilon(x) = 0$ for $|x|<\varepsilon/2$, in the first integral 
on the right-hand side we can integrate by parts:
\begin{align}\label{eq:backpar}
\int u\, \p^j\phi\, dtdx = 
&\ -\int \p^j\, u\, \phi\,(1-\t_\varepsilon)\, dt dx  
+ \int u\, \p^j\big(\phi\,\t_\varepsilon\big)\, dt dx\notag\\ 
= &\ -\int \p^j\, u\, \phi\, dt dx
+ \int \p^j\, u\, \phi\,\t_\varepsilon\, dt dx 
+ \int u\, \p^j\big(\phi\,\t_\varepsilon\big)\, dt dx.
\end{align}
Further proof is conducted for $p <\infty$ only. 
The modifications for the case $p = \infty$ are obvious. 
By H\"older's inequality, the second term on the right-hand side of 
\eqref{eq:backpar} estimates as follows:
\begin{align*}
\bigg|\int \p^j\, u\, \phi\,\t_\varepsilon\, dt dx\bigg|
\le \bigg[\int\limits_{|x|< \varepsilon}|\p^j u|^p\, dt dx\bigg]^{\frac{1}{p}} \, 
\bigg[\int\limits_{|x|< \varepsilon}\, dtdx\bigg]^{\frac{1}{q}}, 
\end{align*}
where $q^{-1} = 1- p^{-1}$. Thus the right-hand side does not exceed 
\begin{align}\label{eq:1}
\varepsilon^{\frac{d}{q}}\bigg[\int\limits_{|x|< \varepsilon}|\p^j u|^p\, dt dx\bigg]^{\frac{1}{p}}
\to 0,\quad \textup{as}\quad \varepsilon\to 0.
\end{align} 
Consider the third integral on the right-hand side of \eqref{eq:backpar}. 
By H\"older's inequality 
again, 
\begin{align*}
\bigg|\int u\, \p^j\big(\phi\,\t_\varepsilon\big)\, dtdx\bigg|
\le \bigg[\int\limits_{|x|< \varepsilon}|u|^p\, dt dx\bigg]^{\frac{1}{p}} \,
\max|\p^j (\phi\,\t_\varepsilon)|\, 
\bigg[\int\limits_{|x|< \varepsilon}\, dt dx\bigg]^{\frac{1}{q}}. 
\end{align*}
%
%where $q^{-1} = 1- p^{-1}\le d$. 
%
As $\max|\p^j (\phi\t_\varepsilon)|\lesssim \varepsilon^{-1}$, and $q = (1-p^{-1})^{-1} \le d$, 
the right-hand side does not exceed 
\begin{align*}
\varepsilon^{\frac{d}{q} - 1}\bigg[\int\limits_{|x|< \varepsilon}|u|^p\, dtdx\bigg]^{\frac{1}{p}}
\to 0,\quad \textup{as}\quad \varepsilon\to 0.
\end{align*}
Together with \eqref{eq:1} this implies that the last two terms on the right-hand side 
of \eqref{eq:backpar} tend to zero as $\varepsilon\to 0$. This entails 
\eqref{eq:parts}, as required. This completes the proof of the lemma.
\end{proof}      
      
\vskip 0.5cm

  \textbf{Acknowledgments.} 
The author is grateful to J. Cioslowski    
for comments and stimulating discussions, and to M. Lewin for pointing out the asymptotic 
problem for the kinetic energy density operator. 
Thanks also go to A. Nazarov 
%, D. Vassiliev, G. Rozenblum  for their advice, to 
and A. Tyulenev for bringing to the author's attention the book 
\cite{Burenkov1998}, 
and to D. Edmunds, V. Kozlov, V. Maz'ya, G. Rozenblum, D. Vassiliev 
for their advice on Sobolev spaces.  

The author was supported by the EPSRC grant EP/P024793/1.

\end{document}